\documentclass[conference]{IEEEtran}
\IEEEoverridecommandlockouts
\usepackage{cite}
\usepackage{amsmath,amssymb,amsfonts}
\usepackage{amsthm}
\usepackage{graphicx}
\usepackage{subcaption}
\graphicspath{{figs/}}
\usepackage{tikz}
\usetikzlibrary{positioning,arrows.meta,calc}
\usepackage{textcomp}
\usepackage{xcolor}

\usepackage{booktabs}
\usepackage{multirow}
\usepackage{url}
\usepackage{enumitem}
\usepackage{hyperref}
\hypersetup{colorlinks=true, linkcolor=blue, citecolor=blue, urlcolor=cyan}
\def\BibTeX{{\rm B\kern-.05em{\sc i\kern-.025em b}\kern-.08em
    T\kern-.1667em\lower.7ex\hbox{E}\kern-.125emX}}

\newtheoremstyle{myCustomStyle}{2pt}{2pt}{\normalfont}{0pt}{}{.}{0.5em}{}
\theoremstyle{myCustomStyle}
\newtheorem{myDef}{\textsc{Definition}}
\newtheorem{myTheo}{\textsc{Theorem}}
\newtheorem{myEx}{\textsc{Example}}
\newtheorem{myLem}{\textsc{Lemma}}
\newtheorem{myProp}{\textsc{Proposition}}

\begin{document}

\title{VeriTS: Verifiable Model-Enhanced Time-Series Queries on Blockchain Systems}

\author{% five authors, with college line, default IEEE flow
\IEEEauthorblockN{1\textsuperscript{st} Zhongming Yao\textsuperscript{*}\thanks{\textsuperscript{*}Corresponding author.}}
\IEEEauthorblockA{\textit{College of Computer Science and Technology} \\
\textit{Zhejiang University}\\
Zhejiang, China \\
yaozzzm@gmail.com}
\and[\leftskip=-10pt\rightskip=-10pt\hfill]
\IEEEauthorblockN{2\textsuperscript{nd} Jun Pang}
\IEEEauthorblockA{\textit{College of Computer Science and Technology} \\
\textit{Wuhan University of Science and Technology}\\
Hubei, China \\
pangjun@wust.edu.cn}
\and
\IEEEauthorblockN{3\textsuperscript{rd} Chenxu Wang}
\IEEEauthorblockA{\textit{School of Software Engineering} \\
\textit{Xi'an Jiaotong University}\\
Xi'an, China \\
cxwang@mail.xjtu.edu.cn}
\and[\hfill\mbox{}\newline\mbox{}\hfill]
\IEEEauthorblockN{4\textsuperscript{th} Qian Ma}
\IEEEauthorblockA{\textit{Information Science and Technology College} \\
\textit{Dalian Maritime University}\\
Dalian, China \\
maqian@dlmu.edu.cn}
\and
\IEEEauthorblockN{5\textsuperscript{th} Peiyuan Guan}
\IEEEauthorblockA{\textit{School of Computer Science \& School of Software} \\
\textit{Nanjing University of Information Science \& Technology}\\
Nanjing, China \\
peiyuang@nuist.edu.cn}
\and
\IEEEauthorblockN{6\textsuperscript{th} Shiliang Zhang}
\IEEEauthorblockA{\textit{Department of Informatics} \\
\textit{University of Oslo}\\
Oslo, Norway \\
shilianz@ifi.uio.no}
% ---- one-line version (no college line, four authors, pre-Guan), kept for easy restore ----
% \IEEEauthorblockN{1\textsuperscript{st} Zhongming Yao\textsuperscript{*}\thanks{\textsuperscript{*}Corresponding author.}}
% \IEEEauthorblockA{\textit{Zhejiang University}\\
% Zhejiang, China \\
% yaozzzm@gmail.com}
% \and
% \IEEEauthorblockN{2\textsuperscript{nd} Qian Ma}
% \IEEEauthorblockA{\textit{Dalian Maritime University}\\
% Dalian, China \\
% maqian@dlmu.edu.cn}
% \and
% \IEEEauthorblockN{3\textsuperscript{rd} Chenxu Wang}
% \IEEEauthorblockA{\textit{Xi'an Jiaotong University}\\
% Xi'an, China \\
% cxwang@mail.xjtu.edu.cn}
% \and
% \IEEEauthorblockN{4\textsuperscript{th} Shiliang Zhang}
% \IEEEauthorblockA{\textit{University of Oslo}\\
% Oslo, Norway \\
% shilianz@ifi.uio.no}
}

\maketitle

\begin{abstract}
Blockchain data is temporal. Every transaction carries a timestamp and the chain imposes a total order, so on-chain data forms per-source time-series streams. However, existing systems support only basic lookups on blocks and transactions, and cannot answer time-series queries such as time-range retrieval and windowed aggregation. Offloading queries off-chain restores expressiveness, but the off-chain query layer is untrusted, so results must be verifiable. To this end, we propose \texttt{VeriTS}, the first \underline{\textbf{veri}}fiable \underline{\textbf{t}}ime-\underline{\textbf{s}}eries query framework for blockchain systems. It supports efficient range and aggregation queries without altering blockchain storage structures. \texttt{VeriTS} maintains an off-chain query layer that represents each stream through an authenticated aggregate interval tree. The tree serves as the query index and as the authenticated data structure at once, so a windowed aggregate is answered by folding a logarithmic number of node aggregates. \texttt{VeriTS} verifies completeness through a minimum covering set and soundness through aggregate folding. It extends both guarantees to an approximate path over model segments, redefining completeness and soundness under bounded error. Miners validate a model's residual rather than replay its computation, so even an adversarial encoder can inflate proof size and answer width but never correctness. Experiments offer evidence that on windowed aggregation, \texttt{VeriTS} improves verification efficiency by more than two orders of magnitude over per-record proofs. Range-query proofs shrink by up to $14.5\times$.
\end{abstract}

\begin{IEEEkeywords}
time series, learned representation, verifiable query, completeness and soundness, blockchain
\end{IEEEkeywords}

% ======================================================================
\section{Introduction}\label{sec:intro}

Blockchain is a decentralized, distributed ledger technology. Through hashed data structures, cryptographic primitives, and consensus mechanisms, it keeps data secure, transparent, and immutable. It is now used widely in finance~\cite{treleaven2017blockchain}, healthcare~\cite{attaran2022blockchain}, and energy~\cite{liu2024pricing}. Beyond its role as a ledger, blockchain data is inherently temporal. Every transaction carries a timestamp, and the chain imposes a total order. On-chain data thus forms time-series streams appended block after block, such as an account's transfer amounts, a trading pair's successive prices, or a contract's gas consumption. These ordered streams constitute the temporal view of on-chain data. Concretely, the record $o_{r}=(t_r, v_r)$ extracted from a transaction in block $B_r$ carries a timestamp $t_r$ and a value $v_r$, such as a transferred amount. Grouped by source and ordered by time, such records form time-ordered streams.

Applications continuously query such streams. A regulator audits the sum of an account's transfers over a reporting window. A trader requests the min/max trade price in an interval. A protocol monitor asks for the average gas price over a sliding range. These reduce to two primitives, time-range retrieval and windowed aggregation over SUM, COUNT, MIN, MAX, and AVG. However, blockchains support only block- and transaction-level lookups~\cite{yao2023learned}. Under sequential storage, even a single windowed aggregate forces a scan of every block in the window. As the chain grows to millions of blocks, every such query degrades to a full range scan.

We aim to enable expressive time-series queries on blockchains. This raises three challenges.

\noindent\textbf{\textit{Challenge 1: How to organize blockchain time-series data for efficient querying.}} Existing studies offload blockchain data to high-performance external databases~\cite{li2017etherql,zhu2019sebdb,peng2020falcondb,wu2021vql,yao2025vgq}, but they target relational or graph workloads. A relational scheme flattens transactions into rows and tables, so a windowed aggregate either scans the window or maintains auxiliary indexes. A graph scheme serves topology traversals and has no native index over an ordered time interval. An on-chain stream moreover keeps appending block after block, and this write-heavy workload demands an index that keeps up with appends cheaply. Neither line of work indexes windowed aggregation over an ordered interval, the dominant time-series access pattern, without auxiliary scans.

\noindent\textbf{\textit{Challenge 2: How to achieve efficient light-client verification of time-series query results.}} Whatever the off-chain scheme, data managed by the query layer is not trustworthy, for two reasons. First, unlike a blockchain, an off-chain store lacks tamper-proof structures, so its data can be silently altered. Second, the query-layer operator is outside the blockchain consensus and may behave dishonestly, compromising completeness or soundness. Hence results must be verified. For an aggregate, verification means two things. Completeness requires that no in-range point is missing, so that the aggregate covers exactly the queried window. Soundness requires that every returned point lies on-chain and in-range. Re-downloading the window and recomputing~\cite{wu2021vql} defeats the purpose for a light client. Without an authenticated structure, the client cannot detect a dropped or fabricated in-window record.

\noindent\textbf{\textit{Challenge 3: How to make approximate answers from model representations verifiable.}} Time series also offer an opportunity absent from the graph and relational settings. Monitoring queries, such as the sliding-window gas average above, tolerate a bounded error yet are issued continuously and demand low latency. Time-series mining~\cite{li2022evolutionary,hu2023spatio,hu2024estimator,yao2024tsec} in turn increasingly runs on compact model-based representations, from piecewise-linear approximation~\cite{eichinger2015time} to the representations learned by today's time-series models. Answering such queries from model parameters instead of records is orders of magnitude cheaper. Two obstacles stand in the way. The first is verification semantics. An answer computed from a representation carries no evidence that it reflects the data it claims to summarize. No prior verifiable-query system defines what completeness and soundness even mean when a value is reconstructed within an error bound rather than retrieved. The second obstacle, more fundamental once learned representations enter the picture, is trust. A model is an opaque, expensive, and often non-deterministic artifact~\cite{xu2026tcrl}. Admitting one into a verifiable pipeline normally means enlarging the trusted computing base or asking a light client to re-run inference. Neither is acceptable.

To address these three challenges we propose \texttt{VeriTS}. For the first, \texttt{VeriTS} establishes an off-chain query layer that models on-chain data as ordered streams, where each record carries a timestamp, a value, and tags. This is the native model of production time-series databases~\cite{wang2023iotdb,influxdb2024tsm}, and it enables logarithmic range and aggregation queries. For the second, \texttt{VeriTS} introduces a layered verification mechanism. The query layer answers queries. Miners on the blockchain layer authenticate the supporting structure and anchor its digest on-chain, so a light client verifies with a succinct verification object. For the third, \texttt{VeriTS} adds a verifiable approximate path that answers error-tolerant queries from authenticated model segments. Their digest is likewise anchored on-chain. The approximate path is an addition, not a replacement. The regulator's audit still demands a bit-exact sum. Heavy-tailed event streams compress too poorly for segments to pay. \texttt{VeriTS} therefore maintains both paths and lets the client choose per query. Concretely, the main contributions are summarized as follows.

\begin{itemize}[leftmargin=*]
\item We present \texttt{VeriTS}, to our knowledge the first framework for verifiable time-series queries on blockchains. It alters no storage structure of the underlying blockchains.

\item \texttt{VeriTS} represents blockchain data as per-source ordered streams over an authenticated aggregate interval tree. One structure serves as both the query index and the authenticated data structure, answering and certifying windowed aggregates in the same logarithmic bound.

\item \texttt{VeriTS} verifies completeness through a minimum covering set with boundary proofs and soundness through subtree-embedded aggregate folding. Both extend to range and tag-constrained queries and to sliding windows.

\item \texttt{VeriTS} further adds a verifiable approximate path over model segments authenticated by a segment-interval tree, redefining completeness and soundness under bounded error. The path is encoder-independent, so an untrusted encoder affects proof size and answer width but never correctness.

\item An experimental study offers evidence that \texttt{VeriTS} is capable of outperforming baselines in terms of verification efficiency by more than two orders of magnitude and of reducing range-query proofs by up to $14.5\times$.
\end{itemize}

Section~\ref{sec:related} reviews related work. Section~\ref{sec:pre} gives preliminaries. Section~\ref{sec:framework} presents the framework. Section~\ref{sec:verify} details verifiable query processing. Section~\ref{sec:fastpath} presents the verifiable approximate path over model segments. Section~\ref{sec:security} analyzes security. Section~\ref{sec:exp} reports the experimental results. Section~\ref{sec:conclusion} concludes.

% ======================================================================
\section{Related Work}\label{sec:related}

\noindent\textbf{Querying with authenticated data structures.}
Verifiable queries on blockchains have been built on accumulator- and Merkle-based authenticated data structures (ADSs). vChain~\cite{xu2019vchain} and vChain+~\cite{wang2022vchain+} answer Boolean range queries over object tuples with accumulator indexes and boundary proofs. GEM$^2$-tree~\cite{zhang2019gem} reduces on-chain gas for authenticated range queries. Several schemes~\cite{ruan2021lineagechain,singh2023efficient,zhang2021authenticated} employ authenticated skip lists for provenance and range access. IntegriDB~\cite{zhang2015integridb} builds an accumulator-based ADS for a subset of SQL, returning the actual aggregate rather than mere set membership. For aggregation specifically, the authenticated aggregation B-tree~\cite{li2010authenticated} embeds aggregate values and Merkle hashes in index nodes to certify SUM/COUNT/MIN/MAX/AVG and even holistic aggregates. However, none of these schemes targets the ordered, high-write time-series stream, and all of them authenticate \emph{exact} values.

\noindent\textbf{On-chain/off-chain models.}
Hybrid systems store bulk data off-chain and anchor digests on-chain. SEBDB~\cite{zhu2019sebdb}, FalconDB~\cite{peng2020falcondb}, and VeriDKG~\cite{zhou2023veridkg} support efficient off-chain querying with succinct on-chain metadata. VeriDKG notably splits each proof into a Merkle proof (location/completeness) and a data-aggregation proof (aggregate soundness). Nevertheless, these systems target relational/knowledge-graph data, not ordered time series.

\noindent\textbf{Querying through query layers.}
EtherQL~\cite{li2017etherql} and VQL~\cite{wu2021vql} offload blockchain queries to an external database, with VQL anchoring a per-database fingerprint on-chain and splitting heavy miner-side verification from light user-side checking. Blockchain databases built in this mold further support top-$k$ transaction-path queries~\cite{hao2022efficient}, learned-index-based semantic keyword search~\cite{yao2023learned}, trusted data provenance~\cite{yao2023efficient}, and secure sharing across interoperable chains~\cite{hao2023efficient}. Most recently, VGQ~\cite{yao2025vgq} authenticates queries over the graph view of on-chain data. However, per-query integrity in these layers is absent, coarse-grained, or bound to the graph view. Verifiable windowed aggregation over the temporal view is left open.

% ======================================================================

\section{Preliminaries}\label{sec:pre}

\subsection{Cryptographic Hash Function}
\begin{myDef}\label{def:hash}
A \textbf{cryptographic hash function} $\textit{hash}(\cdot)$ maps input $d$ to a fixed-size digest $\textit{hash}(d)$ and satisfies (i) preimage resistance, (ii) second-preimage resistance, and (iii) collision resistance.
\end{myDef}

\subsection{Blockchain}
\begin{myDef}\label{def:bc}
A \textbf{blockchain} $\textit{BC}=\{B_r\mid r\geq 1\}$ is a sequence of blocks. A \textbf{block} $B_r=(S_{\textit{header}},S_{\textit{body}})$ has header $S_{\textit{header}}=(r,\textit{hash}(B_{r-1}),t_r,\textit{root}_r)$ and body $S_{\textit{body}}$ containing a list of transactions, where $t_r$ is the timestamp when $B_r$ was created and $\textit{root}_r$ is the root of an on-chain authenticated structure. Once appended, $B_r$ is immutable.
\end{myDef}

\begin{myDef}\label{def:tx}
A \textbf{transaction} in block $B_r$ is $\textit{tx}_{r,i}=(i,\textit{sid},t,v,W)$, where $i$ is the index of the transaction within $B_r$, $\textit{sid}$ identifies its source (e.g., the sender account, a contract, or a trading pair), $t$ is its timestamp, i.e., the block timestamp $t_r$ or the transaction's own creation time where the chain records one, $v$ is a numeric value carried by the transaction (e.g., the transferred amount), and $W$ is an optional set of further attributes (e.g., the transaction type).
\end{myDef}

% ======================================================================

\section{The VeriTS Framework}\label{sec:framework}

\subsection{Overview}\label{sec:overview}
\begin{figure}[t]
\centering
\includegraphics[width=\columnwidth]{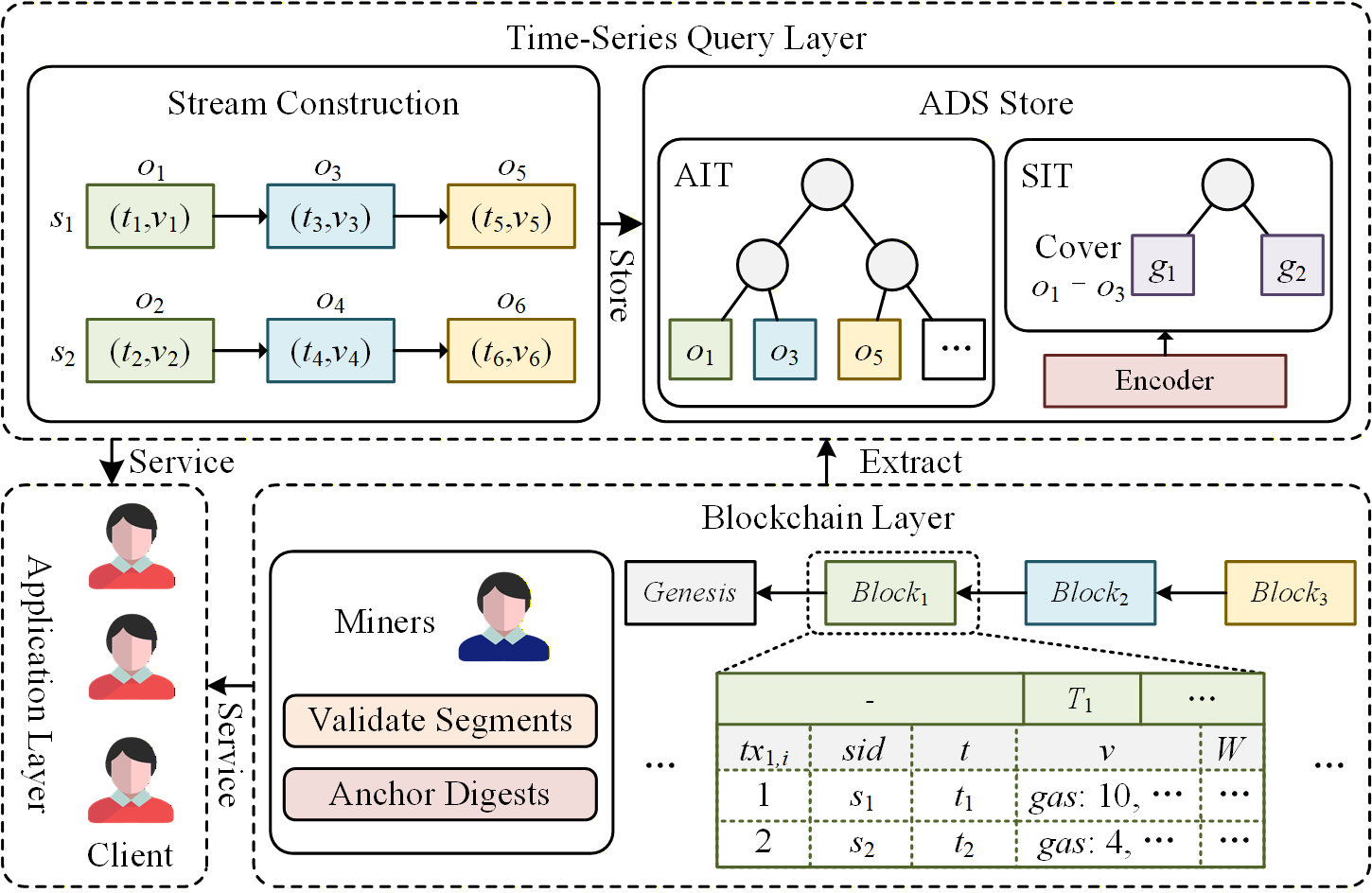}
\caption{The framework of \texttt{VeriTS}.}
\label{fig:framework}
\vspace{-6mm}
\end{figure}
Fig.~\ref{fig:framework} illustrates the framework of \texttt{VeriTS}. It involves three parties as follows.

\noindent\textbf{Blockchain Layer.} The blockchain layer stores transaction data. Miners in the blockchain provide the verification service.

\noindent\textbf{Query Layer.} The query layer is an off-chain service that ingests records from the blockchain and answers queries with a verification object (VO). It stores each stream in two representations. Raw records are kept under the ADS of Section~\ref{sec:ads} and serve exact queries, the \emph{exact path}. Certified model segments are kept under the second ADS of Section~\ref{sec:fastpath} and serve approximate queries with error-bounded answers, the \emph{approximate path}.

\noindent\textbf{Application Layer.} The application layer is a light client that issues queries and can verify the query results. It does not store raw data.

\texttt{VeriTS} operates in five steps. First, as blocks are appended, the query layer extracts records from their transactions and groups them into time-ordered streams. Second, off-line, the query layer runs an encoder that partitions each stream into model segments and assigns their error budgets. The encoder is the only learned component in \texttt{VeriTS}. Third, miners maintain both ADSs of each stream, validate the submitted segments against the raw records, and write both digests into the block header under consensus. Fourth, the client sends $q$ to the query layer, which computes the answer and assembles the VO. Fifth, the client checks the VO against the anchored digest and accepts or rejects.

\noindent\textbf{Threat Model.} The blockchain layer is trusted. The application layer is trusted only for its own verification. The query layer, including its encoder, is untrusted.

\subsection{Data Model}\label{sec:datamodel}
\texttt{VeriTS} organizes transaction data into time-series form.

\begin{myDef}\label{def:record}
A \textbf{time-series record} is $o=(\textit{sid},t,v,W)$, where $\textit{sid}$ identifies the source stream, $t$ is the timestamp, $v$ is the value, and $W$ is an optional attribute set.
\end{myDef}

Each record is derived from the transactions of one block $B_r$ and carries an on-chain position within that block. A record arises in one of two ways: (i) it is a transaction of Definition~\ref{def:tx} itself, with position $(r,i)$; or (ii) it is a per-block metric computed from all transactions of $B_r$, such as its mean gas price, with position $(r,0)$.

\begin{myDef}\label{def:stream}
A \textbf{time-series stream} $S=\langle o_1,o_2,\dots\rangle$ collects the records of one $\textit{sid}$ in increasing order of $t$, ties broken by on-chain position, so the order within a stream is strict. The on-chain time-series dataset is $\mathcal{S}=\{S^{(1)},\dots,S^{(m)}\}$, one stream per source.
\end{myDef}

\subsection{Queries in VeriTS}\label{sec:queries}
\begin{myDef}\label{def:rangeq}
A \textbf{time-range query} $q_{\textit{range}}=(\textit{sid},[t_s,t_e])$ returns all records of stream $\textit{sid}$ with $t\in[t_s,t_e]$.
\end{myDef}
\begin{myDef}\label{def:aggq}
An \textbf{aggregation query} $q_{\textit{agg}}=(\textit{sid},[t_s,t_e],\textsf{f})$ returns $\textsf{f}(\{v : t\in[t_s,t_e]\})$ for $\textsf{f}\in\{\textsf{SUM},\textsf{COUNT},\textsf{MIN},\textsf{MAX},\textsf{AVG}\}$.
\end{myDef}

\begin{myEx}\label{ex:aggq}
Suppose stream $\textit{sid}$ holds five records $o_r=(t_r,v_r)$ with values $10,12,9,15,11$ at times $t_1,\dots,t_5$. Then $q_{\textit{agg}}=(\textit{sid},[t_1,t_3],\textsf{SUM})$ returns $10+12+9=31$. With $\textsf{AVG}$ it returns $31/3$.
\end{myEx}

% ======================================================================

\section{Verifiable Query Processing}\label{sec:verify}

\subsection{Authenticated Aggregate Interval Tree}\label{sec:ads}
\texttt{VeriTS}'s ADS is an \emph{authenticated aggregate interval tree} (AIT) over the leaf order of a stream. It adapts the authenticated aggregation B-tree~\cite{li2010authenticated} to append-only, time-ordered streams with a chain-anchored digest. Each internal node $n$ stores an aggregate $\alpha_n=\sum_{c\in\textit{child}(n)}\alpha_c$ and a hash
\begin{equation}
\eta_n=\textit{hash}(\eta_{c_1}\,\|\,\alpha_{c_1}\,\|\,\cdots\,\|\,\eta_{c_f}\,\|\,\alpha_{c_f}),
\label{eq:node}
\end{equation}
so an internal node certifies the aggregate of its whole subtree \emph{without} traversing leaves. The root hash $\eta_{\textit{root}}$ is the per-stream digest anchored in the block header of Definition~\ref{def:bc}. Incremental maintenance on append updates the $O(\log n)$ nodes on one root-to-leaf path.

The AIT is also the query layer's primary index. A windowed aggregate is answered by folding the aggregates of the $O(\log n)$ nodes covering the window, replacing a full scan of the window. Query acceleration and verifiability thus come from a single structure rather than a separate index and ADS.

\noindent\textbf{Completeness Verification.} A query answer is \emph{complete} iff no on-chain record of $\textit{sid}$ with $t\in[t_s,t_e]$ is omitted. Completeness is certified by a minimum covering set (MCS), a set of disjoint subtrees whose leaf ranges union to exactly $[t_s,t_e]$. The VO adds their sibling hashes (SIB) and the two \emph{boundary} records immediately outside the window, i.e., the last record before $t_s$ and the first after $t_e$. The client recomputes $\eta_{\textit{root}}$ from MCS$\cup$SIB via Eq.~\eqref{eq:node} and checks it against the anchored digest. The boundary records prove that no in-window record was dropped at the edges.

% [PENDING-RESTORE 2026-08-28] 完整性运行例子，篇幅允许时恢复（去掉下面各行行首的 %）
% \begin{myEx}\label{ex:complete}
% Continuing Example~\ref{ex:aggq}, for $[t_1,t_3]$ the MCS is the subtree covering $\{o_1,o_2,o_3\}$. The boundary record $o_4$ (first after $t_e=t_3$) proves nothing between $o_3$ and $o_4$ was hidden. Recomputing the root and matching the anchored digest confirms completeness. Dropping $o_2$ changes the subtree hash and fails the root check.
% \end{myEx}

\noindent\textbf{Soundness Verification.} A query answer is \emph{sound} iff every returned record is authenticated and in-range, and any returned aggregate equals the verified computation over exactly the in-range records. Because each MCS node carries its subtree aggregate $\alpha_n$ and hash $\eta_n$, the client \emph{folds} the MCS aggregates upward, $\textsf{ANS}(q)=\sum_{n\in\textit{MCS}(q)}\alpha_n$, and accepts only if the folded root matches the anchored digest. For $\textsf{MIN}/\textsf{MAX}$, summation in the fold is replaced by $\min/\max$ over the MCS aggregates. Since Eq.~\eqref{eq:node} binds every $\alpha_n$ into its parent's hash, an understated subtree extremum is caught by the root check exactly as a tampered sum. $\textsf{AVG}$ is verified as $\textsf{SUM}/\textsf{COUNT}$ over one shared MCS. The client thus runs three checks in order. They are the covering check above, the root check against the anchored digest, and the $\alpha$-fold.

\subsection{Range and Tag-Constrained Queries}\label{sec:rangetag}
For a time-range query, the VO opens the MCS subtrees down to their leaves. The server returns all $k$ in-window records together with SIB and the two boundary records. The client rebuilds the leaf hashes, recomputes $\eta_{\textit{root}}$, and matches the anchored digest. Completeness follows from the MCS-and-boundary argument of Section~\ref{sec:ads}, and soundness from each record's authenticated inclusion. The cost is $O(k+\log n)$, linear only in the result the client must receive anyway.

A tag-constrained aggregation applies $\textsf{f}$ to the in-window records whose tag set $W$ satisfies a predicate. Such a query cannot use the $\alpha$-fold directly. Precomputed subtree aggregates summarize \emph{all} records in a subtree, not the filtered subset. \texttt{VeriTS} therefore answers such queries through the range VO. The client receives the authenticated in-window records, filters by the predicate locally, and aggregates the filtered subset. This inherits soundness from leaf authentication and completeness from the MCS, again in $O(k+\log n)$.

\subsection{Sliding-Window Aggregation}\label{sec:sliding}
Monitoring applications, such as the gas-price monitor of Section~\ref{sec:intro}, issue a \emph{sequence} of aggregation queries over windows $W_i=[t_s+i\delta,\ t_e+i\delta]$ that overlap heavily. Verifying each window independently repeats work. Adjacent MCSs share all nodes covering $W_i\cap W_{i+1}$. \texttt{VeriTS} exploits this by letting the client cache the verified $(\eta_n,\alpha_n)$ pairs of the current MCS. On each slide the server ships only the \emph{delta VO}. It contains the nodes covering the entering sub-interval $W_{i+1}\setminus W_i$ and the new boundary record. The client drops expired nodes and re-folds. Each slide thus verifies $O(\log n)$ new nodes in the worst case, and only the entering records' cover in the common case, amortizing VO bandwidth across the query sequence.

% [PENDING-RESTORE 2026-08-28] Cost Analysis 小节（Lemma+证明+收尾段），篇幅允许时恢复（去掉下面各行行首的 %）
% \subsection{Cost Analysis}\label{sec:cost}
% \begin{myLem}\label{lem:cost}
% For an aggregation query over a window of $k$ records in an AIT of $n$ leaves with fanout $f$, the MCS contains at most $2\lceil\log_f n\rceil$ nodes. The VO (MCS, SIB, and two boundary records) has size $O(f\log_f n)$. VO generation takes $O(\log n)$ after two root-to-leaf boundary descents. Client verification takes $O(f\log_f n)$ hash evaluations and fold operations, independent of $k$.
% \end{myLem}
% \begin{proof}
% The window's leaf interval admits a canonical cover with at most two maximal subtrees per level of the tree, giving $|\textit{MCS}|\leq 2\lceil\log_f n\rceil$. Each MCS or path node contributes at most $f{-}1$ sibling entries to SIB, giving the $O(f\log_f n)$ VO size. The server locates the two window boundaries by two descents and emits the cover in the same traversal. The client performs one hash per reconstructed node on the two root paths and one fold operation per MCS node.
% \end{proof}
% For distributive aggregates the structure is hash-only (no pairing operations), with KB-scale proofs expected at $10^5$--$10^6$ records~\cite{li2010authenticated,zhang2015integridb}. Range and tag-constrained queries add the unavoidable $O(k)$ result transfer of Section~\ref{sec:rangetag}.

% ======================================================================
\section{Verifiable Approximate Path}\label{sec:fastpath}

Section~\ref{sec:verify} returns bit-exact answers, but many applications tolerate a bounded error in exchange for latency~\cite{eichinger2015time,li2020compression,li2021trace}, e.g., tracking an average gas price on a dashboard. \texttt{VeriTS} therefore maintains \emph{model segments} for each stream, answering queries from segment metadata within a certified error bound. Section~\ref{sec:contract} fixes what a segment is, Section~\ref{sec:segauth} how it is certified, Section~\ref{sec:approxsem} what an accepted answer guarantees, and Section~\ref{sec:payoff} why the encoder needs no trust and where the path pays.

\subsection{Model Segments}\label{sec:contract}
\begin{myDef}\label{def:encoder}
An \textbf{encoder} $\mathcal{E}$ maps a stream $S$ to an ordered set of segments partitioning its time domain. \texttt{VeriTS} imposes only three \textbf{representation constraints} on $\mathcal{E}$. Write $p=0,\dots,c_j-1$ for a record's position inside segment $g_j$. (i) \emph{Value model.} The covered values are reconstructed by $\hat v_j(p)$ with parameters $\theta_j$ from a fixed \emph{closed-form family} $\mathcal{F}$, such that $\textsf{f}$ over the reconstructed values of any contiguous position range is computable from $\theta_j$ in $O(1)$. (ii) \emph{Arrival model.} The covered timestamps are reconstructed by a monotone closed-form $\hat t_j(p)$ with parameters $\phi_j$, so that a queried time range maps to a position range in $O(1)$. (iii) \emph{Certified bounds.} The segment carries $\varepsilon^{v}_j$ and $\varepsilon^{t}_j$ holding for every record it covers.
\end{myDef}

Condition~(ii) of Definition~\ref{def:encoder} is easy to overlook. A window boundary rarely coincides with a segment boundary, so answering $[t_s,t_e]$ requires knowing \emph{which} records of the two straddled segments fall inside. The value model alone does not carry that information. Modeling the arrival process as well, and certifying it by the same mechanism, keeps that resolution closed-form (Section~\ref{sec:approxsem}). Nothing else about $\mathcal{E}$ is assumed. Where to cut the stream, with which coefficients, and how to split the error budget are left entirely to $\mathcal{E}$. \texttt{VeriTS} thus separates a \emph{learning} problem, solved off-line by a model of any size, from a \emph{verification} problem, solved on-line by closed-form arithmetic.

\begin{myDef}\label{def:segment}
A \textbf{model segment} $g_j=(\,[\tau_j,\tau_{j+1}),\,\theta_j,\,\phi_j,\,$ $\varepsilon^{v}_j,\,\varepsilon^{t}_j,\,c_j\,)$ represents the $c_j$ records with $t\in[\tau_j,\tau_{j+1})$, in position order, by a value model $\hat v_j(p)$ and an arrival model $\hat t_j(p)$ with certified bounds
\begin{equation}
|\,\hat v_j(p)-v_p\,|\leq\varepsilon^{v}_j,\qquad |\,\hat t_j(p)-t_p\,|\leq\varepsilon^{t}_j
\label{eq:bounds}
\end{equation}
for every covered record.
\end{myDef}

We instantiate both models in the linear family, $\hat v_j(p)=a_jp+b_j$ and $\hat t_j(p)=\mu_jp+\beta_j$ with $\mu_j>0$ the mean inter-arrival time. In this family \textsf{SUM}, \textsf{COUNT}, \textsf{MIN}, \textsf{MAX}, and \textsf{AVG} over contiguous position ranges are closed-form.

The stream is stored as an ordered set of segments $\{g_1,\dots,g_s\}$ with $s\ll n$~\cite{yao2024camel}. Because both models are indexed by position rather than by time, an aggregate over a contiguous position range is a function of $\theta_j$ and the range endpoints alone. No per-record data is needed. A window is therefore answered in time proportional to the number of overlapping segments rather than records.

\subsection{Segment Authentication}\label{sec:segauth}
Segments are authenticated by a segment-interval tree (SIT), an AIT built as in Eq.~\eqref{eq:node} with leaves $\textit{hash}(\,[\tau_j,\tau_{j+1})\,\|\,\theta_j\,\|\,$ $\phi_j\,\|\,\varepsilon^{v}_j\,\|\,\varepsilon^{t}_j\,\|\,c_j\,)$, whose root is anchored on-chain alongside the raw digest. One more scalar is anchored with that root, the stream's budget cap $\max_j\varepsilon^{v}_j$. This is because $\textsf{MIN}/\textsf{MAX}$ verification (Lemma~\ref{lem:eps}) needs the largest budget among the answering segments, and a folded aggregate does not carry it. Both bounds are certified at construction time. Before a segment leaf enters the tree, miners, who hold the raw on-chain records, replay the covered window and check the two inequalities of Eq.~\eqref{eq:bounds} at every covered position, together with the record count $c_j$. Only validated segments are anchored under consensus. Since each record belongs to exactly one segment, this check costs amortized $O(1)$ per record and lies entirely off the query path. Miners evaluate $\hat v_j$ and $\hat t_j$, closed forms of a few arithmetic operations each, but never execute $\mathcal{E}$. Certification is therefore independent of the encoder's cost, of its determinism across hardware, and of whether its parameters are public at all.
\subsection{Verification Semantics and Error Bound}\label{sec:approxsem}
An accepted approximate answer carries two guarantees.

\noindent\textbf{Model-Completeness.} An approximate-path answer is \emph{model-complete} iff the query interval $[t_s,t_e]$ is fully covered by authenticated segments (a boundary proof over segments, not records). The client verifies this by a segment-level MCS with boundary proof over the SIT.

\noindent\textbf{$\varepsilon$-Soundness.} The answer is \emph{$\varepsilon$-sound} iff it is assembled exclusively from reconstructions of authenticated parameters, each carrying its certified bound. For every position $p$ the answer draws on,
\begin{equation}
v_p\in\big[\hat v_j(p)-\varepsilon^{v}_j,\ \hat v_j(p)+\varepsilon^{v}_j\big]
\text{ and }
t_p\in\big[\hat t_j(p)-\varepsilon^{t}_j,\ \hat t_j(p)+\varepsilon^{t}_j\big].
\label{eq:eps}
\end{equation}
The client computes the interval endpoints of Eq.~\eqref{eq:eps} from the authenticated parameters itself, never receiving a reconstructed value or consulting a raw record.

\noindent\textbf{Boundary Resolution.} A window edge $t_s$ falls inside some segment $g_j$, and the client must decide which of its positions lie in the window. It cannot decide exactly, because it does not hold the timestamps, but the arrival model brackets the answer. Since $t_p\in[\hat t_j(p)-\varepsilon^{t}_j,\hat t_j(p)+\varepsilon^{t}_j]$ and $\hat t_j$ is increasing, position $p$ is certainly in the window when $\hat t_j(p)-\varepsilon^{t}_j\geq t_s$ and certainly out when $\hat t_j(p)+\varepsilon^{t}_j<t_s$. Inverting the linear model, the undecided positions form a run of length at most $k_j=\big\lceil\,2\varepsilon^{t}_j/\mu_j\,\big\rceil$, which the client computes in $O(1)$ from authenticated parameters. It aggregates over the certain positions and charges the undecided ones to the answer interval. The run is governed by the regularity of the arrival process, not by segment length. With Ethereum-like spacing, $\mu_j{=}12$\,s and $\varepsilon^{t}_j{=}6$\,s, at most one position is undecided however long the segment.

Lemma~\ref{lem:eps} propagates the per-position bounds of Eq.~\eqref{eq:eps} and the boundary charge into an interval for each aggregate. A window is answered over segments $g_{j_1},\dots,g_{j_m}$. Among these, $\partial$ collects the segments not certified to lie wholly inside $[t_s,t_e]$, with undecided total $K=\sum_{j\in\partial}k_j$. The set $\mathcal{C}$ holds the positions certified in the window, with $\tilde c_{j_i}=|\mathcal{C}\cap g_{j_i}|$ per segment, and $\textsf{S}$ is the sum of their reconstructed values. Deviations collect into a value term $\Delta_v=\sum_i\varepsilon^{v}_{j_i}\tilde c_{j_i}$ and a boundary term $\Delta_t=\sum_{j\in\partial}k_jV_j$, where $V_j=\max_p|\hat v_j(p)|+\varepsilon^{v}_j$ is computable from $\theta_j$. For the extrema, $\bar\varepsilon=\max_i\varepsilon^{v}_{j_i}$, while $\hat M$ and $\hat M^{+}$ are the largest reconstructions over $\mathcal{C}$ alone and over $\mathcal{C}$ plus the undecided positions.

\vspace{-1mm}
\begin{myLem}\label{lem:eps}
If every segment satisfies Eq.~\eqref{eq:bounds}, then
\begin{equation}
\begin{aligned}
\textsf{S}^{*}&\in\big[\,\textsf{S}-\Delta_v-\Delta_t,\ \ \textsf{S}+\Delta_v+\Delta_t\,\big],\\
\textsf{COUNT}^{*}&\in\big[\,|\mathcal{C}|,\ |\mathcal{C}|+K\,\big],
\end{aligned}
\label{eq:prop}
\end{equation}
$\textsf{MAX}^{*}\in[\,\hat M-\bar\varepsilon,\ \hat M^{+}+\bar\varepsilon\,]$, symmetrically for $\textsf{MIN}$, and $\textsf{AVG}$ is bounded by dividing the $\textsf{SUM}$ interval by the $\textsf{COUNT}$ interval. If the stream's values are non-negative, as for transferred amounts or gas, the boundary term is one-sided and the lower endpoint tightens to $\textsf{S}-\Delta_v$.
\end{myLem}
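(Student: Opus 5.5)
The proof decomposes the true window into the certified positions $\mathcal{C}$ and a residual set $U$ of undecided positions that actually lie in $[t_s,t_e]$. We first justify this decomposition from the boundary-resolution argument. In every segment wholly inside the window all positions are certified. In a straddled segment $g_j\in\partial$, monotonicity of $\hat t_j$ and the timestamp bound of Eq.~\eqref{eq:bounds} give two facts: every position with $\hat t_j(p)-\varepsilon^{t}_j\geq t_s$ (and symmetrically at $t_e$) has $t_p$ in the window, and every position with $\hat t_j(p)+\varepsilon^{t}_j<t_s$ has $t_p$ outside it. Positions of uncovered segments are excluded by model-completeness of the segment-level MCS. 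Hence the true in-window set $\mathcal{I}$ satisfies $\mathcal{C}\subseteq\mathcal{I}\subseteq\mathcal{C}\cup\mathcal{U}$, where $\mathcal{U}$ is the undecided run. Inverting the linear arrival model, the undecided positions $p$ satisfy $|\mu_jp+\beta_j-t_s|\leq\varepsilon^{t}_j$, an interval of length $2\varepsilon^{t}_j/\mu_j$, so the run has at most $k_j$ positions and $|\mathcal{U}|\leq K$. This gives the $\textsf{COUNT}$ interval immediately. One detail needs care. A segment that straddles both $t_s$ and $t_e$ has two undecided runs. We would either count it twice in $\partial$ (as a multiset) or read $k_j$ as the per-edge charge and note that $K$ sums over edges. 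I will fix the multiset convention.

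For $\textsf{SUM}$, write $\textsf{S}^{*}=\sum_{p\in\mathcal{C}}v_p+\sum_{p\in\mathcal{I}\setminus\mathcal{C}}v_p$. Split the first sum per segment and apply the value bound of Eq.~\eqref{eq:bounds}. This gives $|\sum_{p\in\mathcal{C}}v_p-\textsf{S}|\leq\sum_i\varepsilon^{v}_{j_i}\tilde c_{j_i}=\Delta_v$. For the second sum, each undecided position in $g_j$ satisfies $|v_p|\leq|\hat v_j(p)|+\varepsilon^{v}_j\leq V_j$. There are at most $k_j$ such positions per boundary segment, so the sum has absolute value at most $\Delta_t$. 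The triangle inequality then yields the stated interval. When values are non-negative, the second sum lies in $[0,\Delta_t]$, so only the upper endpoint pays $\Delta_t$, and the lower endpoint becomes $\textsf{S}-\Delta_v$.

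For $\textsf{MAX}$, the lower bound comes from the certified positions. Let $p^{\circ}\in\mathcal{C}$ attain $\hat M$. Then $\textsf{MAX}^{*}\geq v_{p^{\circ}}\geq\hat M-\varepsilon^{v}_{j}\geq\hat M-\bar\varepsilon$. For the upper bound, every $p\in\mathcal{I}\subseteq\mathcal{C}\cup\mathcal{U}$ has $v_p\leq\hat v_j(p)+\varepsilon^{v}_j\leq\hat M^{+}+\bar\varepsilon$. $\textsf{MIN}$ follows by negation. For $\textsf{AVG}$, both $\textsf{S}^{*}$ and $\textsf{COUNT}^{*}$ lie in their intervals, and $\textsf{COUNT}^{*}\geq|\mathcal{C}|>0$ whenever $\mathcal{C}\neq\emptyset$. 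The quotient therefore lies in the interval-arithmetic quotient $[\min,\max]$ of the four endpoint ratios. We would state the $|\mathcal{C}|=0$ case as vacuous or unbounded.

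The main obstacle is the first step, the containment $\mathcal{C}\subseteq\mathcal{I}\subseteq\mathcal{C}\cup\mathcal{U}$ with $|\mathcal{U}\cap g_j|\leq k_j$. This step silently requires that $t_p$ is strictly increasing in $p$ and that the certified/undecided thresholds are computed on the correct side at each edge. It also has the ceiling and off-by-one issues noted above. I would handle these by treating each edge separately with explicit inequalities. Everything after that is a routine triangle-inequality computation.
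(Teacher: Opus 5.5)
Your proposal is correct and follows the paper's own route. You split the window into the certified set $\mathcal{C}$ and the undecided runs, charge $\Delta_v$ to the former and at most $V_j$ per undecided position to the latter (one-sided for non-negative values), use the sandwich $\mathcal{C}\subseteq\mathcal{I}\subseteq\mathcal{C}\cup\mathcal{U}$ for the extrema, and divide intervals for $\textsf{AVG}$. You are more careful than the paper's few lines, and strict increase of $t_p$ is not actually needed, since each position is classified by its own timestamp bound.

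Two of the details you flagged resolve cleanly, and the first is a real fix rather than a formality. The paper charges ``at most $k_j$ per segment of $\partial$'', which fails when a short window lies inside one segment. With $\hat t_j(p)=12p$, $\varepsilon^{t}_j=6$ and $[t_s,t_e]=[30,66]$, positions $2$ and $6$ are both undecided and can both lie in the window, while $k_j=1$. So your per-edge (multiset) convention for $K$ and $\Delta_t$ is needed. The off-by-one disappears once you write the undecided condition half-open, $t_s-\varepsilon^{t}_j\le\hat t_j(p)<t_s+\varepsilon^{t}_j$, as the certain-in/certain-out thresholds give. The closed form $|\hat t_j(p)-t_s|\le\varepsilon^{t}_j$ can contain $\lceil 2\varepsilon^{t}_j/\mu_j\rceil+1$ positions when $2\varepsilon^{t}_j/\mu_j$ is an integer.
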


\vspace{-1mm}
\begin{proof}
By Eq.~\eqref{eq:bounds} the certified records of $g_{j_i}$ contribute at most $\varepsilon^{v}_{j_i}\tilde c_{j_i}$ deviation, giving $\Delta_v$ in total. The exact answer additionally includes some subset of the undecided positions, at most $k_j$ per segment of $\partial$, each of magnitude at most $V_j$, giving $\Delta_t$, one-sided when values are non-negative. Adding the two bounds gives Eq.~\eqref{eq:prop}, and \textsf{COUNT} questions exactly the undecided positions. For the extrema, the window's true positions include every certified position and are included in the candidate set. Each true value lies within $\bar\varepsilon$ of its reconstruction, which gives both endpoints. $\textsf{MIN}$ is symmetric, and $\textsf{AVG}$ follows from the $\textsf{SUM}$ and $\textsf{COUNT}$ intervals.
\end{proof}
\vspace{-1mm}

The two terms are asymmetric. $\Delta_v$ scales with the window's record count. $\Delta_t$ is paid only at the edges, so long windows are dominated by the value budget and short ones by the boundary budget. Adjacent padded intervals $[\tau_j-\varepsilon^{t}_j,\ \tau_{j+1}+\varepsilon^{t}_j]$ overlap by up to $2\varepsilon^{t}$, so $|\partial|\leq 2$ whenever a segment's time span exceeds $2\varepsilon^{t}$. Lemma~\ref{lem:eps} is stated over $\partial$ so that it stays sound when a loose budget puts several segments on one edge. Our evaluation reports $K$. The client therefore verifies an interval guaranteed to contain the exact answer, rather than a point value of unknown accuracy.

\subsection{Encoder Independence and Payoff}\label{sec:payoff}
The encoder faces two decisions. The first, partitioning under a single global bound, is a solved problem. Greedily extending each segment as far as the bound permits already minimizes the segment count~\cite{orourke1981online,xie2014optimalplr}, so no amount of modeling capacity buys a smaller $s$ on this axis. The second, allocating the error budget, remains open. Under condition~(iii) of Definition~\ref{def:encoder} the budget is two-dimensional. Loosening $\varepsilon^{v}_j$ widens the answer interval in proportion to the records the segment contributes. Loosening $\varepsilon^{t}_j$ blurs the window edge instead. It costs nothing for a segment no window edge ever falls into. A segment swept through the middle of many windows therefore wants a tight $\varepsilon^{v}$ and may squander $\varepsilon^{t}$. A segment that repeatedly straddles a window edge wants the opposite. No local sweep observes the workload. This is the axis on which a learned encoder offers an advantage.

The advantage costs no trust. Let $\mathcal{E}$ be an arbitrary, possibly adversarial encoder.

\vspace{-1mm}
\begin{myProp}\label{prop:encoder}
Every approximate-path answer the client accepts returns an interval containing the exact on-chain answer. $\mathcal{E}$ influences the segment count $s$, and through it VO size and latency, but never correctness.
\end{myProp}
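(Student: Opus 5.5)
The proof reduces correctness to two facts already in place: the anchored SIT root is trusted, and Lemma~\ref{lem:eps} is a purely arithmetic consequence of Eq.~\eqref{eq:bounds}. The plan is to show that any accepted answer was computed from segments that satisfy Eq.~\eqref{eq:bounds} against the true on-chain records. Lemma~\ref{lem:eps} then places the exact answer inside the returned interval. Since the encoder never enters this argument, its only remaining influence is on quantities that do not affect the lemma's hypotheses: $s$, the per-segment budgets, and hence VO size, latency, and interval width.

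\emph{Step 1 (authenticity of the answering segments).} Suppose the client accepts. Then the segment-level MCS together with SIB recomputes the anchored SIT root via Eq.~\eqref{eq:node}. By collision resistance (Definition~\ref{def:hash}), every leaf tuple $([\tau_j,\tau_{j+1}),\theta_j,\phi_j,\varepsilon^{v}_j,\varepsilon^{t}_j,c_j)$ the client uses equals a tuple that was actually anchored. The same argument applies to the anchored budget cap $\max_j\varepsilon^{v}_j$ used for $\bar\varepsilon$. The boundary segments on either side of $[t_s,t_e]$ establish model-completeness, exactly as in the record-level argument of Section~\ref{sec:ads}. Every record of the stream with $t\in[t_s,t_e]$ therefore lies in one of the returned segments $g_{j_1},\dots,g_{j_m}$, and no further segment is relevant.

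\emph{Step 2 (certification transfers the bounds).} By Section~\ref{sec:segauth}, a leaf is anchored only after miners replay the raw on-chain records it covers and confirm three things: both inequalities of Eq.~\eqref{eq:bounds} at every position, and the count $c_j$. The blockchain layer is trusted under the threat model, so every authenticated segment satisfies Eq.~\eqref{eq:bounds} with respect to the true records. This holds however $\mathcal{E}$ chose the cuts, the parameters, or the budgets. The only point of contact between $\mathcal{E}$ and the guarantee is this check, and the check evaluates only $\hat v_j$ and $\hat t_j$, never $\mathcal{E}$ itself. An adversarial encoder therefore has exactly two options: submit segments that fail the check and are never anchored, or submit segments that pass and obey Eq.~\eqref{eq:bounds}.

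\emph{Step 3 (client-side computation).} The client computes everything from the authenticated parameters: the certain and undecided positions through the boundary-resolution rule, the counts $k_j$, and the quantities $\textsf{S}$, $\Delta_v$, $\Delta_t$, $\hat M$, $\hat M^{+}$, $\bar\varepsilon$. Step~2 supplies the hypothesis of Lemma~\ref{lem:eps}, so the lemma yields an interval containing $\textsf{SUM}^{*}$, $\textsf{COUNT}^{*}$, $\textsf{MIN}^{*}/\textsf{MAX}^{*}$, or $\textsf{AVG}^{*}$.

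The main obstacle is the boundary step. The claim that the undecided run has length at most $k_j$, and that positions outside it are classified correctly, relies only on monotonicity of $\hat t_j$ (condition~(ii)) together with the certified $\varepsilon^{t}_j$. Both are authenticated, so a malicious $\mathcal{E}$ can only enlarge $k_j$, and cannot cause a misclassification.

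For the second sentence of the proposition, I would observe that $\mathcal{E}$ determines the partition and therefore $s$. Through $s$ it determines the segment-MCS size, which is $O(\log s)$ nodes with their siblings as for the AIT, and the number of segments folded. It also determines the budgets, which govern the widths $\Delta_v$, $\Delta_t$, and $K$. None of these choices appears in the hypotheses of Steps~1--3, so containment holds for every encoder.
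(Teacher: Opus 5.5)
Your proposal is correct and follows the paper's own argument. Authentication under the anchored SIT root, together with the miners' replay check, forces Eq.~\eqref{eq:bounds} on every segment the client reads; Lemma~\ref{lem:eps} then gives containment, and $\mathcal{E}$ appears in neither check. You spell out more than the paper's terse proof does: collision resistance, the segment-level covering check behind model-completeness, and boundary resolution. One small correction there: correct classification needs only the certified $\varepsilon^{t}_j$, while the run-length bound $k_j$ comes from the linear arrival model, not from monotonicity alone.
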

\vspace{-1mm}

\begin{proof}
The client's checks read only fields authenticated under the anchored SIT root, and every anchored segment has passed the replay check of Section~\ref{sec:segauth} under consensus. Eq.~\eqref{eq:bounds} therefore holds for every covered record irrespective of provenance, and Lemma~\ref{lem:eps} bounds the deviation of the accepted interval. A segment violating its declared bound is rejected before anchoring. One altered afterwards fails the root check. $\mathcal{E}$ appears in neither check.
\end{proof}
\vspace{-1mm}

The client's trusted computing base is thus unchanged whether the segments were fitted by a two-line greedy sweep or a billion-parameter pre-trained model.

What remains is where the path pays. A coarser bound merges more records per segment, shrinking the segment count $s$ at the price of a wider answer interval. What that buys differs sharply between the two query classes.

\noindent\textbf{Aggregation.} The gain is logarithmic and therefore small. Both paths fold a canonical cover, so compressing a stream by a factor $n/s$ moves the proof from $O(\log n)$ to $O(\log s)$. It removes $\log_2(n/s)$ levels, three of them at tenfold compression. An approximate path cannot be an order of magnitude cheaper than an exact path that is already logarithmic, and Section~\ref{sec:exp} confirms the effect is modest.

\noindent\textbf{Retrieval.} The gain is the compression ratio itself. A time-range query of Definition~\ref{def:rangeq} obliges the exact path to ship the $k$ records it returns. The approximate path ships instead the $\approx k/(n/s)$ segments that cover them, and the client reconstructs each value within its certified $\varepsilon^{v}_j$. Writing $b_r$ for the bytes of a record and $b_g$ for those of a segment's model fields, the approximate path wins whenever
\begin{equation}
n/s \;>\; b_g / b_r ,
\label{eq:payoff}
\end{equation}
a break-even point set by the stream's compressibility rather than by any parameter of the framework. Below it the approximate path is actively counterproductive, and Section~\ref{sec:exp} exhibits on-chain streams on either side of the threshold.

% ======================================================================
\section{Security Analysis}\label{sec:security}

Miners maintain both per-stream trees. Before a segment enters the SIT, they replay it against the raw records as in Section~\ref{sec:segauth}. Both roots enter the block header under BFT consensus with a $2f{+}1$ quorum, so no single node can anchor a forged digest.

\begin{myDef}\label{def:verif}
A time-series query is \textbf{verifiable} if, for every polynomial-time adversary $\mathcal{A}$ controlling the query layer, the probability that the client accepts an answer violating completeness or soundness is negligible. On the exact path these are the record-level notions of Section~\ref{sec:ads}. On the approximate path they are the model-completeness and $\varepsilon$-soundness of Section~\ref{sec:approxsem}, so an accepted answer interval must contain the exact on-chain answer.
\end{myDef}
\begin{myTheo}\label{theo:verif}
\texttt{VeriTS} queries are verifiable under Definition~\ref{def:verif} if (i) $\textit{hash}(\cdot)$ is collision-resistant and (ii) both anchored digests are correct. That is, the AIT root matches the on-chain records, and the SIT root covers only segments that passed the replay check of Eq.~\eqref{eq:bounds}.
\end{myTheo}
\begin{proof}
By contradiction. The client's checks of Sections~\ref{sec:ads} and~\ref{sec:approxsem} jointly enforce two facts. First, the presented cover tiles $[t_s,t_e]$ with no gap, its boundary proof pinning the window edges. Second, the root recomputed from the presented nodes equals the anchored digest. Hiding a record or a segment therefore either opens a gap, which the covering check rejects, or changes a presented node. Tampering with a leaf, an aggregate $\alpha_n$, or a segment field also changes a presented node. This is because every value and segment field enters a leaf hash, and Eq.~\eqref{eq:node} binds every $\alpha_n$ into its parent. Reporting a wrong total without tampering is excluded as well, since the client folds only values it has verified itself. A changed node can reproduce the anchored digest only through a hash collision, contradicting~(i), or through a forged anchor, contradicting~(ii). On the approximate path, assumption~(ii) further guarantees that Eq.~\eqref{eq:bounds} holds for every anchored segment. Lemma~\ref{lem:eps} then places the exact answer inside the returned interval. By Proposition~\ref{prop:encoder} this holds with no assumption on the encoder. Hence $\mathcal{A}$ succeeds only with negligible probability.
\end{proof}

% ======================================================================
\section{Experiments}\label{sec:exp}
\subsection{Experimental Settings}\label{sec:exp-settings}
\noindent\textbf{Implementation.} \texttt{VeriTS}'s query layer and application layer are both implemented in Python. All experiments run on an Intel Core i9-12900H with 32\,GB of memory.

\noindent\textbf{Datasets.} Five streams are drawn from 30,000 Ethereum blocks and 200,000 TRON blocks and are summarized in Table~\ref{tab:data}, where $\mu$ is the mean inter-arrival time in seconds, cv the coefficient of variation of the values, and $n/s$ the compression ratio under the default budgets. They span the four kinds of quantity a record can carry.
\begin{itemize}[leftmargin=*]
\item ETH-Gas is a rate, the mean gas price of a block.
\item ETH-Transfer and TRON-Transfer are events, single transferred amounts drawn from a heavy tail.
\item TRON-Balance is a state, the running balance of the account behind TRON-Transfer, so the two streams differ only in the quantity queried.
\item TRON-Dust is degenerate, a constant 1 sun (\(10^{-6}\) TRX) emitted 168,499 times by the busiest sender on TRON.
\end{itemize}

We keep TRON-Dust deliberately as a labeled upper extreme. This is because a stream can compress either through degeneracy or through smoothness, and the two should not be mistaken for each other. The TRON-Transfer account is selected by total value moved rather than by transaction count, which would select dust senders. Ethereum timestamps are block-level, so Ethereum streams are per block. TRON streams are per transaction.

% \noindent\textbf{Granularity.} The two chains are treated differently because their timestamps mean different things, which we checked against the data rather than assumed. Ethereum's \textit{block\_timestamp} is a block timestamp. Every transaction in a block carries it, so 97--99\% of consecutive transaction records are simultaneous, an arrival process no monotone model can describe and no time-range query can separate. Ethereum streams are therefore per block, and a stream's length is the block count. TRON's \textit{timestamp} is the transaction's own creation time. In our sample, 1.51M rows carry 1.51M distinct stamps across 175k blocks, spread across the 3\,s block interval, and no block carries more than one distinct stamp by accident. TRON's arrival process is genuinely continuous, so its streams are kept at transaction granularity and nothing is aggregated away.

% \noindent\textbf{Account selection.} Ranking accounts by transaction count selects spam. The four busiest senders on TRON each move a constant dust amount, giving a coefficient of variation of \(0.000\). Accounts are therefore ranked by total value moved and screened for a non-degenerate value distribution, which yields a genuine economic actor---27,018 transfers totalling \(3.7\times10^{5}\) TRX over more than a hundred distinct amounts.

\begin{table}[t]
\caption{Statistics of the on-chain streams.}
\label{tab:data}
\centering\footnotesize
\begin{tabular}{@{}llrrrr@{}}
\toprule
\textbf{Stream} & \textbf{Kind} & \textbf{Records} & $\boldsymbol{\mu}$ \textbf{(s)} & \textbf{cv} & $\boldsymbol{n/s}$ \\
\midrule
ETH-Gas       & rate       &  29,123 & 13.93 & 0.724 &  2.83 \\
ETH-Transfer  & event      &  29,123 & 13.93 & 6.072 &  2.14 \\
TRON-Transfer & event      &  27,018 & 19.84 & 4.437 &  2.09 \\
TRON-Balance  & state      &  27,018 & 19.84 & 0.573 & 13.17 \\
TRON-Dust     & degenerate & 168,499 &  3.56 & 0.000 & 72.10 \\
\bottomrule
\end{tabular}
\vspace{-6mm}
\end{table}

\noindent\textbf{Query workload.} The workload consists of time-range and aggregation queries with $\textsf{f}\in\{\textsf{SUM},\textsf{COUNT},\textsf{MIN},\textsf{MAX},$ $\textsf{AVG}\}$, over windows of 100--2000 records. Sliding sequences use these windows with a slide of 10\%--50\%. The value budget is $\varepsilon^{v}=10\%$ of a stream's median $|v|$. The median resists the heavy tail, where a high percentile would let a few whale transfers set a budget that swallows every ordinary record. The arrival budget $\varepsilon^{t}$ is swept in units of each stream's median inter-arrival time. Every point is the mean of five repetitions over 20--40 random windows.

\noindent\textbf{Baselines.} Query efficiency is compared against the two scan-based placements, and verification against a re-download baseline and three per-technique ablations.

\noindent\underline{\textit{\textbf{QA (Query on Application layer)}}} ships the queried window to the client, which aggregates locally.

\noindent\underline{\textit{\textbf{QB (Query on Blockchain)}}} has miners scan the window on-chain and aggregate there.
Both are scans, differing in \emph{where} the scan happens and therefore in what crosses the network rather than in complexity.

For verification, \underline{\textit{\textbf{B-V}}} re-downloads the window and recomputes the aggregate~\cite{wu2021vql}, standing in for the VQL-style route.
\texttt{VeriTS} optimizes verification through three techniques. They are (i) minimum-covering-set completeness verification and (ii) subtree-embedded aggregate folding for soundness, both of Section~\ref{sec:ads}, and (iii) the delta VOs of Section~\ref{sec:sliding} for sliding sequences. Each has a baseline.

\noindent\underline{\textit{\textbf{Completeness baseline (B-MP)}}} proves completeness by a Merkle membership path per in-window record, instead of one minimum covering set with boundary proofs.

\noindent\underline{\textit{\textbf{Soundness baseline (B-Leaf)}}} recomputes the aggregate from authenticated raw leaves, instead of folding subtree aggregates.

\noindent\underline{\textit{\textbf{Sliding baseline (B-Ind)}}} verifies each window of a sliding sequence with an independent VO, instead of shipping only the nodes the client has not already authenticated.

\noindent\textbf{Encoders.} The encoder study of Section~\ref{sec:exp-encoder} compares three budget-allocation policies at a matched segment count, so that VO size and verification time are fixed by construction. \textit{Uniform-$\varepsilon$} spreads one bound over the whole stream, as a workload-blind encoder must. \textit{Learned} splits the stream into regions and fits the two budgets of Definition~\ref{def:encoder} to a held-out sample of the workload by the water-filling law $\varepsilon_r\propto f_r^{-2/3}$, where $f_r$ is the frequency with which queries touch region $r$. \textit{E-Adv} inverts the exponent and spends the budget exactly backwards. The workload places each window on one hotspot region with probability equal to a skew parameter.

\begin{figure}[!t]
\centering
\begin{subfigure}{0.49\linewidth}
  \includegraphics[width=\linewidth]{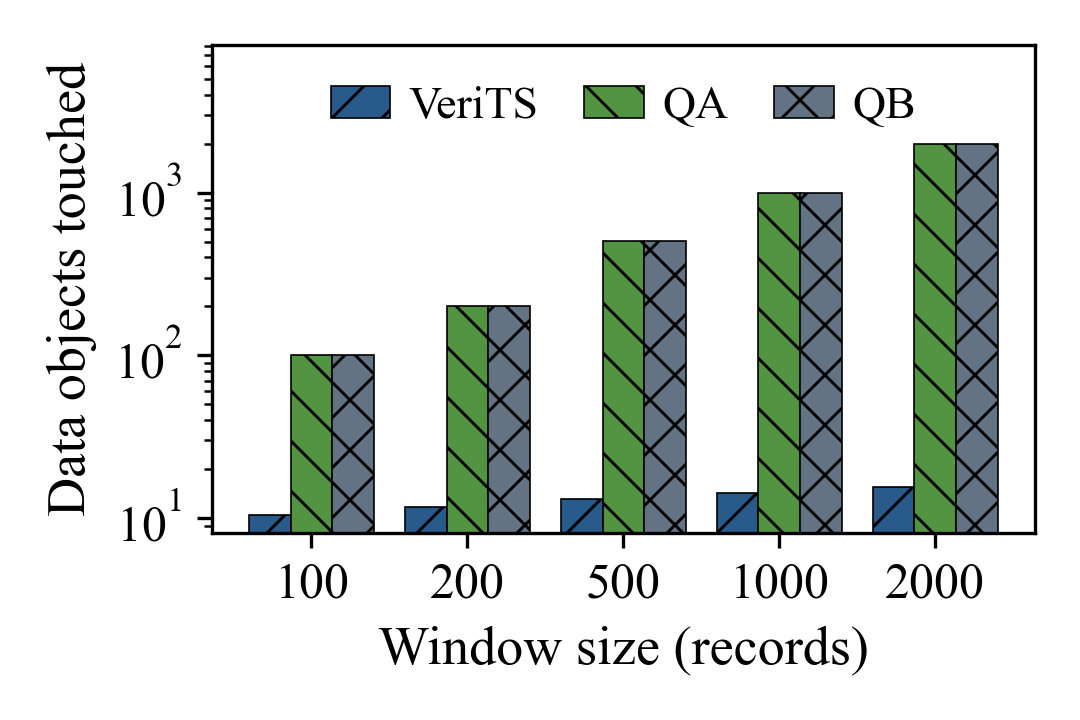}
  \caption{Objects touched}
\end{subfigure}\hfill
\begin{subfigure}{0.49\linewidth}
  \includegraphics[width=\linewidth]{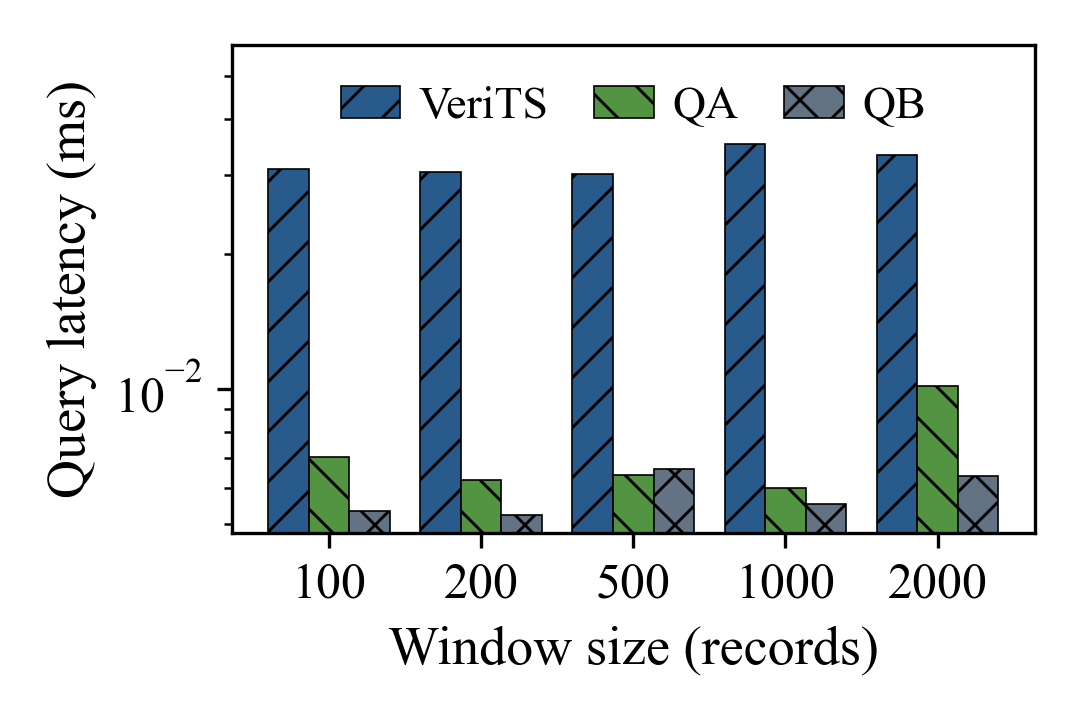}
  \caption{Query latency}
\end{subfigure}
\caption{Exact-path query cost on ETH-Gas.}
\vspace{-6mm}
\label{fig:exact-query}
\end{figure}

\subsection{Query Performance}\label{sec:exp-query}
Fig.~\ref{fig:exact-query}(a) reports the number of data objects each method must touch to answer a windowed aggregate (AIT nodes for \texttt{VeriTS}, records for a scan). \texttt{VeriTS} touches a near-constant handful of objects as the window grows from 100 to 2000 records, while QA and QB touch every record in the window. The gap reaches a factor of 130. This is because \texttt{VeriTS} folds the $O(\log n)$ node aggregates of a canonical cover rather than visiting the window's records, so a twentyfold longer window costs it five more nodes. Moreover, we make two observations.

First, the wall-clock latency in Fig.~\ref{fig:exact-query}(b) does not reproduce this separation. QA and QB are several times faster than \texttt{VeriTS}, though all three stay well under a millisecond. This is because a scan is one vectorized pass while folding walks interpreted tree objects. The gap reflects the prototype rather than the structure, and the advantage survives in the transferred bytes, measured next.

Second, the five streams give almost the same numbers. The verification object at a 2000-record window varies by 10.9\% across all of them. This is because the exact path never inspects what a value means. Its cost is set by the shape of the canonical cover, so a stream of near-constant dust and a stream of heavy-tailed transfers are indistinguishable to it. This is exactly what makes the approximate path, whose cost depends entirely on the values, the interesting half of the framework.

Range queries need no separate experiment on the exact path. This is because the answer is the $k$ records themselves, so the cost is the unavoidable $O(k)$ result transfer of Section~\ref{sec:rangetag}. It appears as the Exact series of Figs.~\ref{fig:range-ev}--\ref{fig:range-be}(a), where it serves as the approximate path's baseline.

\begin{figure}[!t]
\centering
\begin{subfigure}{0.49\linewidth}
  \includegraphics[width=\linewidth]{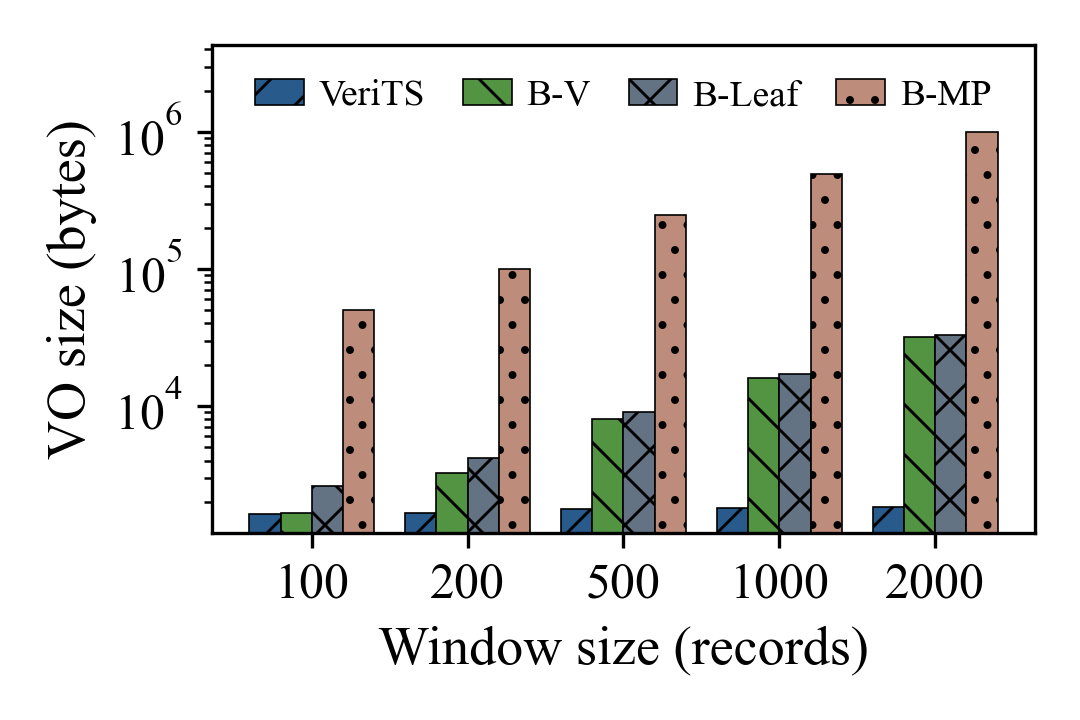}
  \caption{VO size}
\end{subfigure}\hfill
\begin{subfigure}{0.49\linewidth}
  \includegraphics[width=\linewidth]{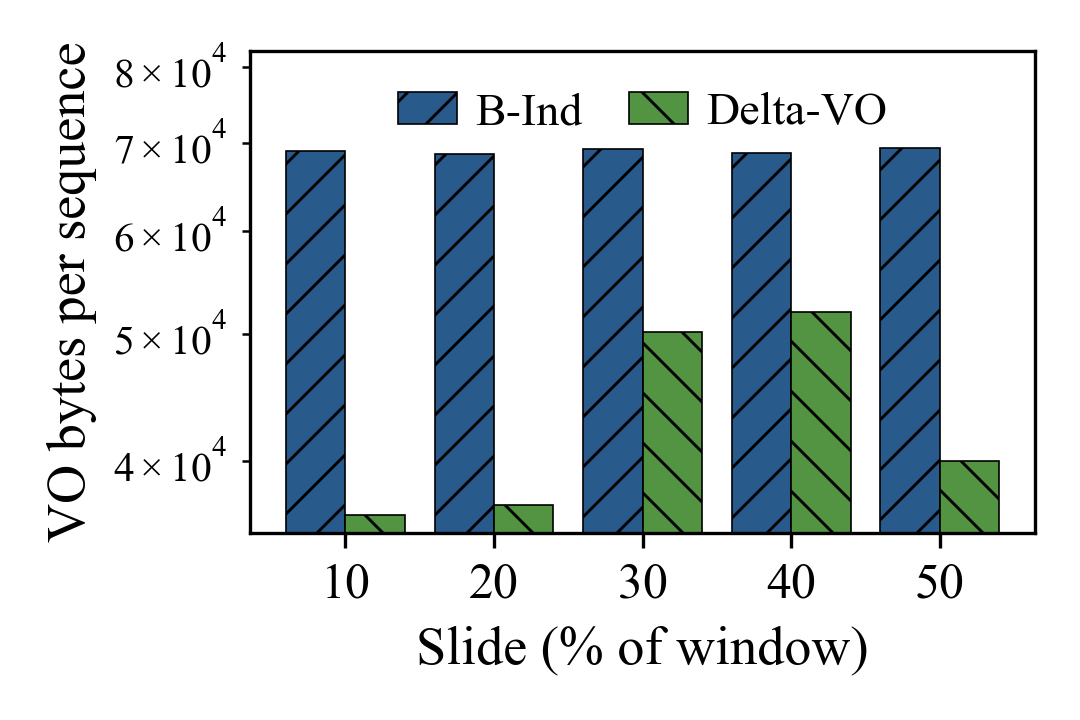}
  \caption{Sliding sequences}
\end{subfigure}
\caption{Exact-path verification cost on ETH-Gas.}
\vspace{-5mm}
\label{fig:exact-verify}
\end{figure}

\subsection{Verification Performance}\label{sec:exp-verify}

\begin{figure}[!t]
\centering
\begin{subfigure}{0.49\linewidth}
  \includegraphics[width=\linewidth]{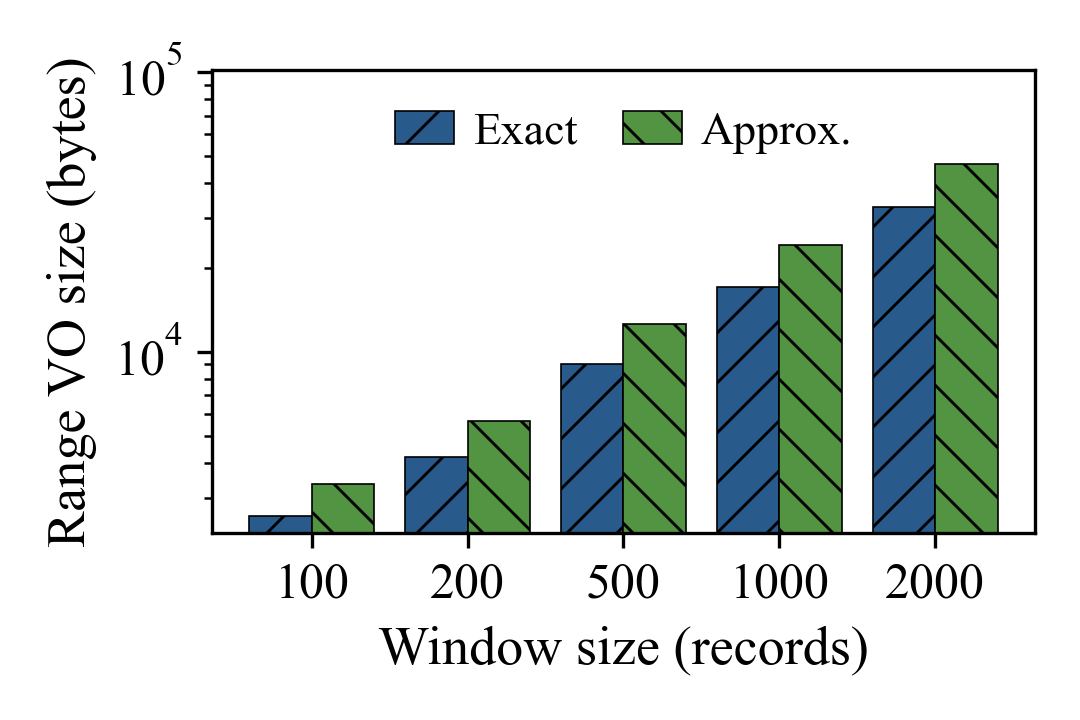}
  \caption{TRON-Transfer, event}
\end{subfigure}\hfill
\begin{subfigure}{0.49\linewidth}
  \includegraphics[width=\linewidth]{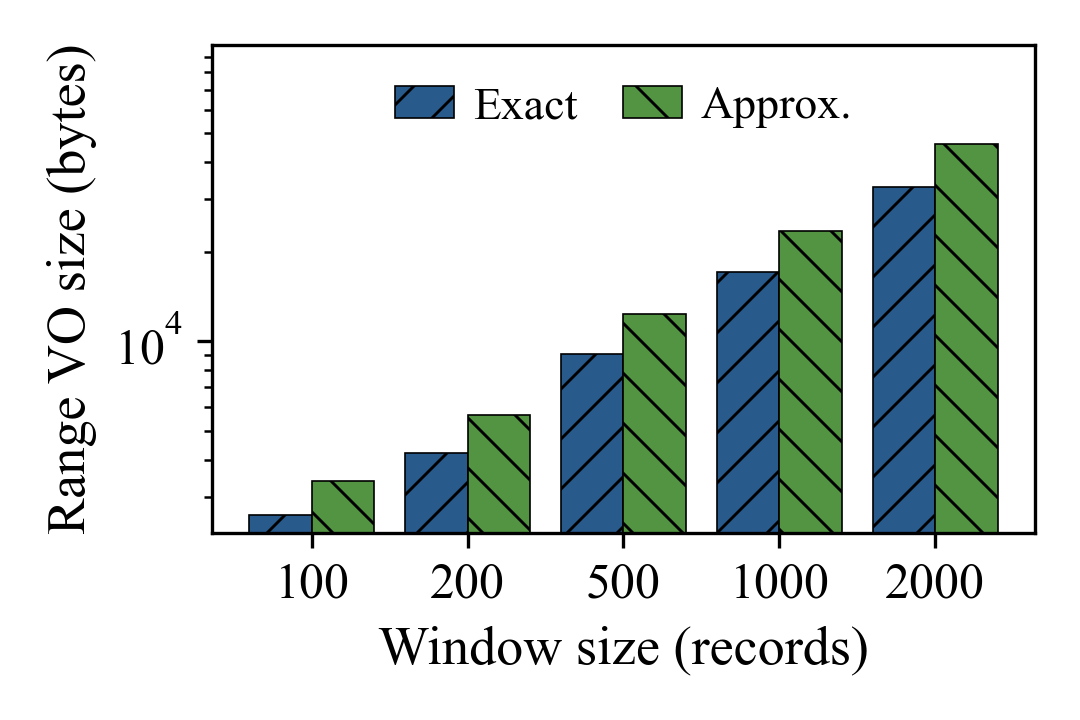}
  \caption{ETH-Transfer, event}
\end{subfigure}
\caption{Range-query VO size (events).}
\vspace{-6mm}
\label{fig:range-ev}
\end{figure}

\subsubsection{Verification cost}\label{sec:exp-vcost}
Fig.~\ref{fig:exact-verify}(a) reports VO size for \texttt{VeriTS} and the three baselines. Over a 2000-record window \texttt{VeriTS} ships a KB-scale VO. B-V and B-Leaf ship roughly $18\times$ more, and B-MP $544\times$ more. This is because (i) B-V re-downloads every record so the client can recompute; (ii) B-Leaf ships every authenticated leaf so the client can rebuild the aggregate; (iii) B-MP pays a $\log n$-deep membership path per record, so its proof grows as $k\log n$; and (iv) \texttt{VeriTS} pays $\log n$ once for the whole window, folding the subtree aggregates that Eq.~\eqref{eq:node} already binds into the root. We make one observation.

The advantage in client time is larger still, a factor of 1031 over B-Leaf at the longest window, with \texttt{VeriTS} still sub-millisecond. This is because B-Leaf must rehash every leaf in the window and fold the aggregate itself, which is linear in $k$ with a hash per record, whereas \texttt{VeriTS} rehashes only the $O(\log n)$ nodes of the cover.

\subsubsection{Sliding-window sequences}
Fig.~\ref{fig:exact-verify}(b) reports delta VOs over 40-window sliding sequences. On ETH-Gas the delta VO is $1.90\times$ smaller than B-Ind at a 10\% slide. This is because the client caches the $(\eta_n,\alpha_n)$ pairs it has already authenticated and the query layer omits them from later windows. We make one observation. The saving does not increase as the slide shrinks, although a 10\% slide overlaps far more records than a 50\% one. This is because a canonical cover is determined by a window's alignment rather than by its content, so two windows sharing most of their records can still decompose into largely different nodes. The reuse a delta VO can exploit is bounded by that re-alignment, not by record overlap.

\subsubsection{Tamper detection}\label{sec:exp-tamper}
We instantiate the adversary of Theorem~\ref{theo:verif} as nine concrete forgeries and replay each over twenty randomly placed windows of every stream. Four forgeries target the exact path. They are an inflated subtree aggregate, a forged node hash, a dropped covering subtree with the answer reduced to match, and a modified boundary record. Five more target what is new on the approximate path. They are a shifted value model $\theta_j$, a halved declared $\varepsilon^{v}_j$, an omitted segment, a narrowed answer interval, and a halved anchored budget cap. All 900 forgeries were rejected. This is because every field the client reads is bound into the anchored root by Eq.~\eqref{eq:node}, and the two quantities that live outside any single node, the folded aggregate and the assembled interval, are recomputed by the client rather than taken from the server. We make one observation. The three attacks that present an internally consistent VO are each caught by a different check. They are the covering-set test, the root check, and the client's own re-derivation. This is why $\varepsilon$-soundness needs the budget cap anchored rather than merely reported alongside the answer.

\subsubsection{Scalability and construction cost}\label{sec:exp-scale}
We sweep stream length by truncating TRON-Dust to prefixes of 10\,k, 20\,k, 50\,k, 100\,k and 168{,}499 records, holding the window at 1{,}000 records. The VO grows by 17\% and client time by 26\% for a 16.8-fold longer stream. This is because only the depth of the structure changes. A fixed window's canonical cover has the same shape however deep the tree is, so each doubling of $n$ adds one sibling to the multiproof and one hash to the client's fold. Miner-side certification, meanwhile, costs microseconds per record. That is a factor of 18 to 462 below the encoder that produced the segments, a cost no miner pays. This is because certification replays the two closed forms $\hat v_j$ and $\hat t_j$ at each position rather than the search that fitted them, which is exactly the asymmetry Section~\ref{sec:segauth} relies on to keep the encoder out of the trust path.

\begin{figure}[!t]
\centering
\begin{subfigure}{0.49\linewidth}
  \includegraphics[width=\linewidth]{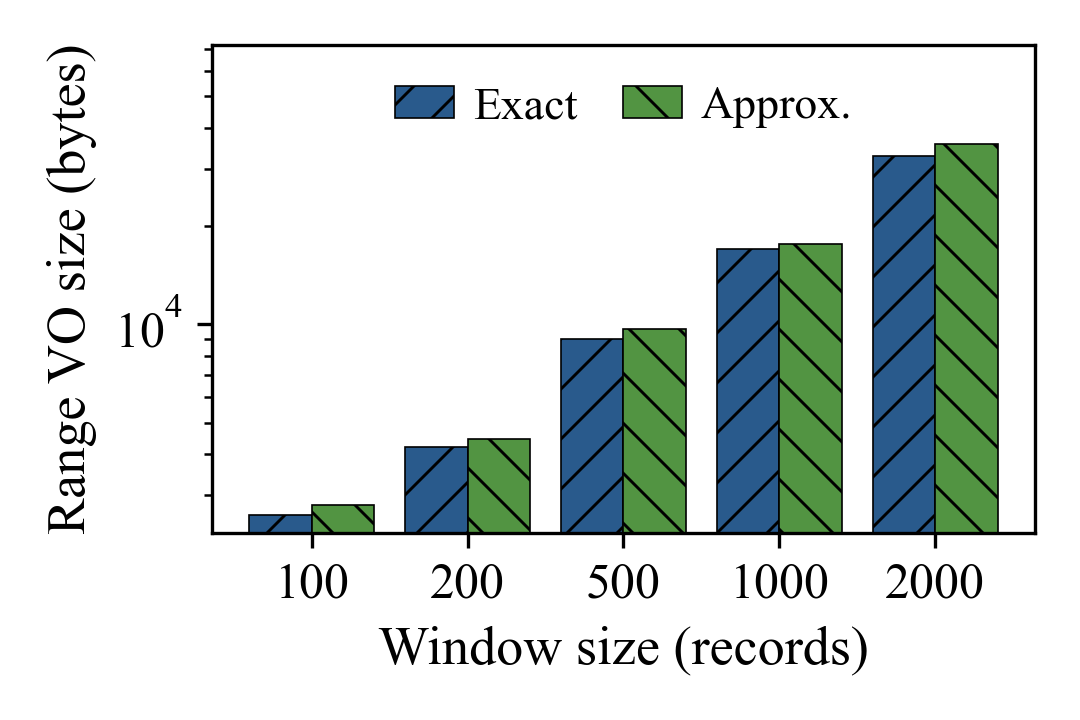}
  \caption{ETH-Gas, rate}
\end{subfigure}\hfill
\begin{subfigure}{0.49\linewidth}
  \includegraphics[width=\linewidth]{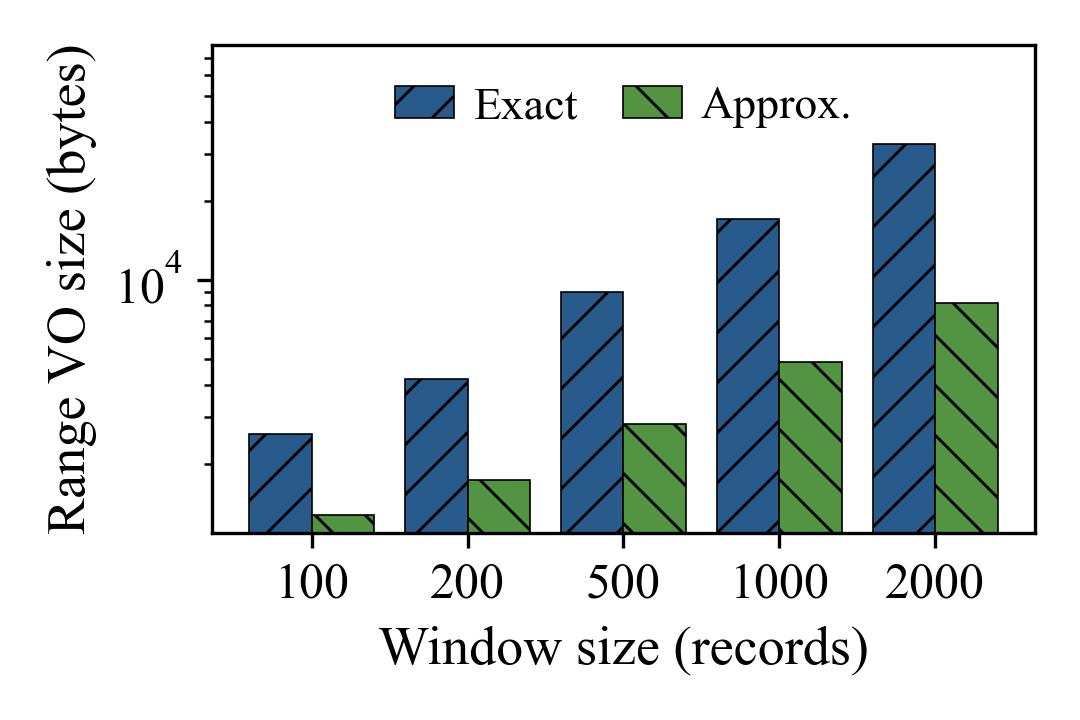}
  \caption{TRON-Balance, state}
\end{subfigure}
\caption{Range-query VO size (rate and state).}
\vspace{-6mm}
\label{fig:range-rs}
\end{figure}

\subsection{Approximate Path Performance}\label{sec:exp-fast}

\subsubsection{Range retrieval}
Figs.~\ref{fig:range-ev}--\ref{fig:range-be}(a) report range-query VO size for the exact and approximate paths on all five streams, ordered by compressibility. The approximate path ships more than the exact path on the two event streams, roughly the same on the rate stream, and $4.05\times$ smaller on the state stream. This is because a range query obliges the exact path to ship the $k$ records it returns while the approximate path ships the $k/(n/s)$ segments covering them, so the trade turns on whether a segment's model fields cost less than the records they stand for. Moreover, we make one observation.

The outcome tracks the kind of quantity rather than the chain. This is because event amounts are near-independent draws with no local structure for a line to follow, so their segments stay two records long. A running total moves smoothly enough for one line to cover thirteen records. A constant is covered by one line indefinitely. The two TRON streams make the point cleanly, since TRON-Balance is the running total of exactly the transfers in TRON-Transfer. The two streams hold identical data, and they receive opposite verdicts.

\begin{figure}[!t]
\centering
\begin{subfigure}{0.49\linewidth}
  \includegraphics[width=\linewidth]{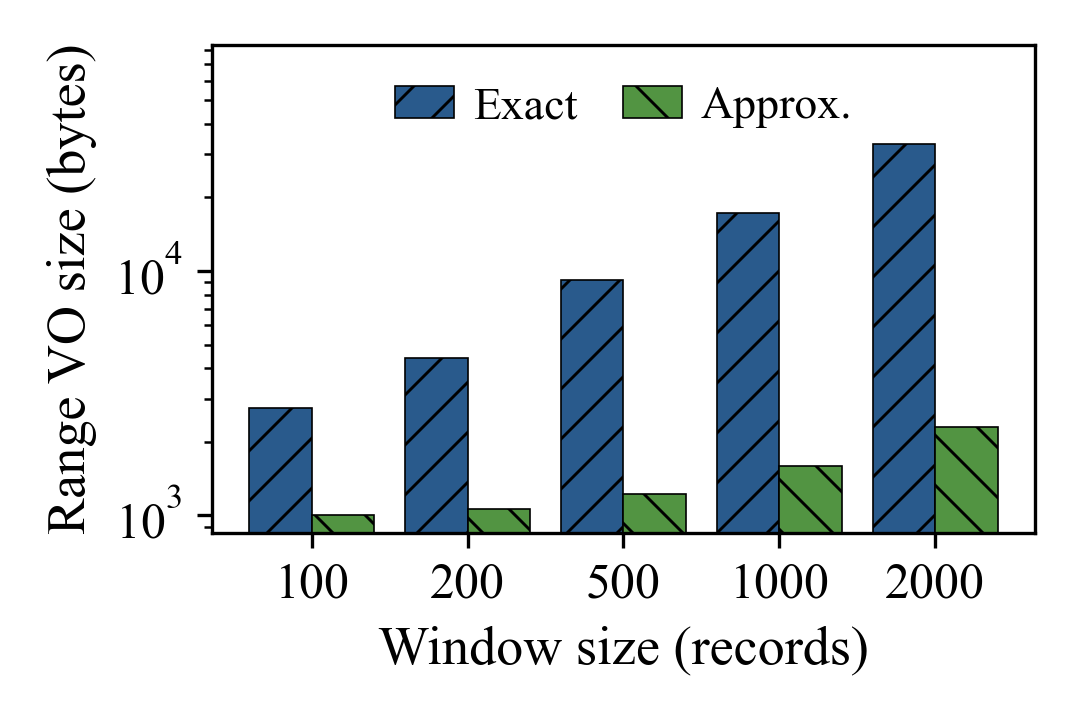}
  \caption{TRON-Dust, degenerate}
\end{subfigure}\hfill
\begin{subfigure}{0.49\linewidth}
  \includegraphics[width=\linewidth]{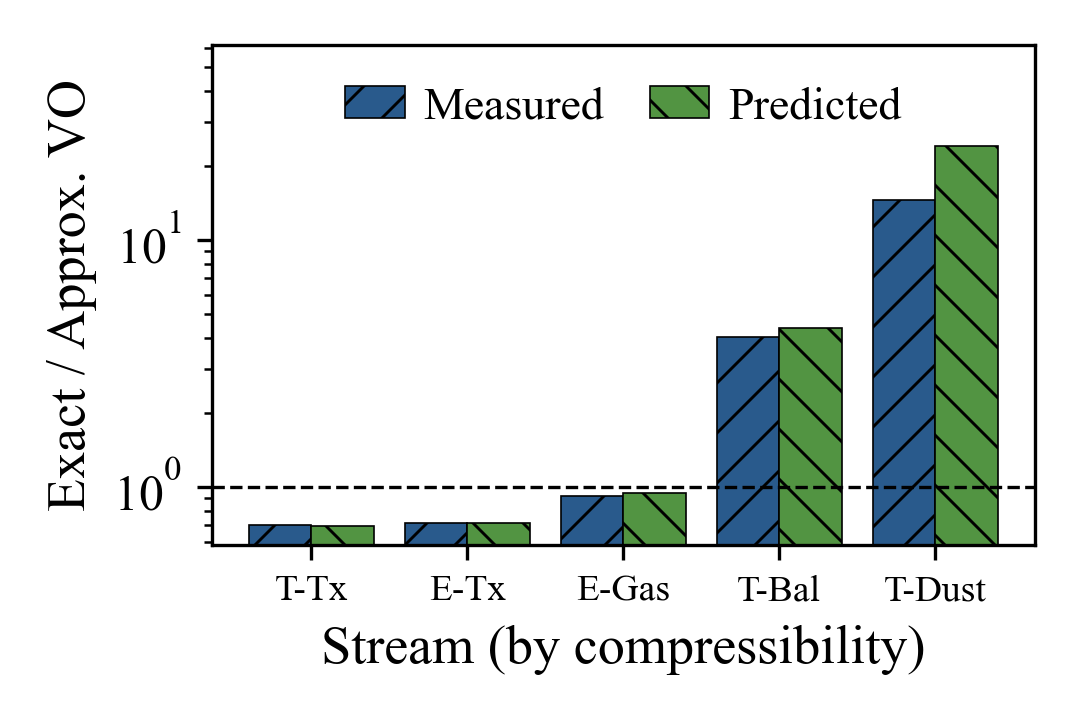}
  \caption{Break-even}
\end{subfigure}
\caption{Range-query VO size and the break-even point.}
\vspace{-5mm}
\label{fig:range-be}
\end{figure}

\begin{figure}[!t]
\centering
\begin{subfigure}{0.49\linewidth}
  \includegraphics[width=\linewidth]{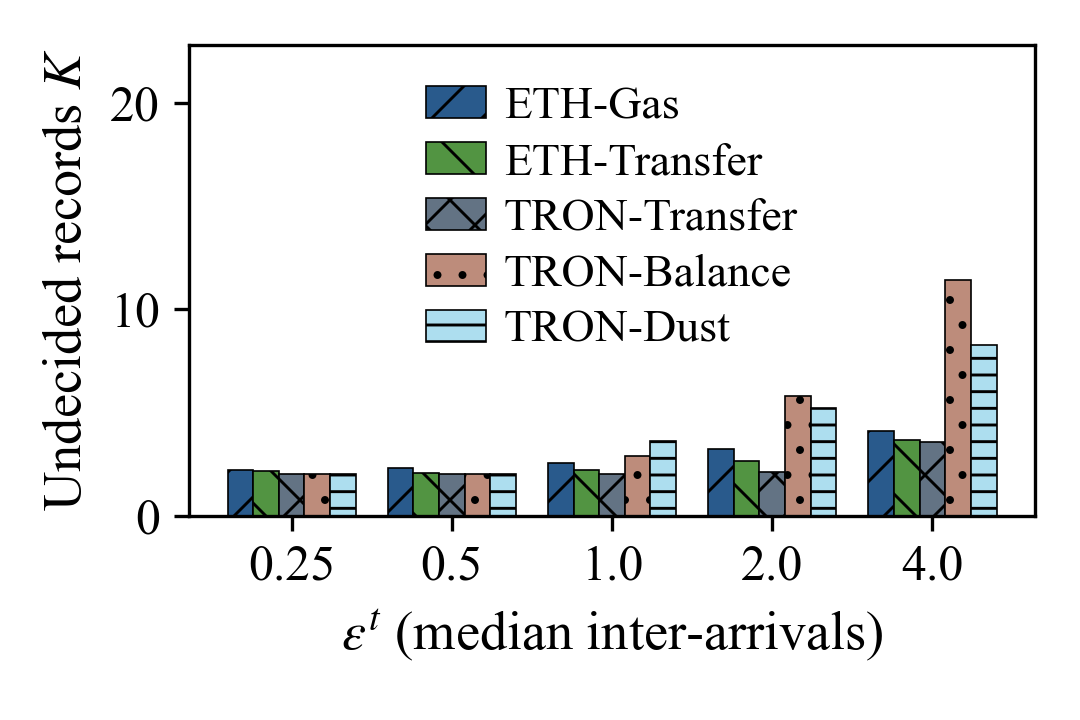}
  \caption{Boundary cost}
\end{subfigure}\hfill
\begin{subfigure}{0.49\linewidth}
  \includegraphics[width=\linewidth]{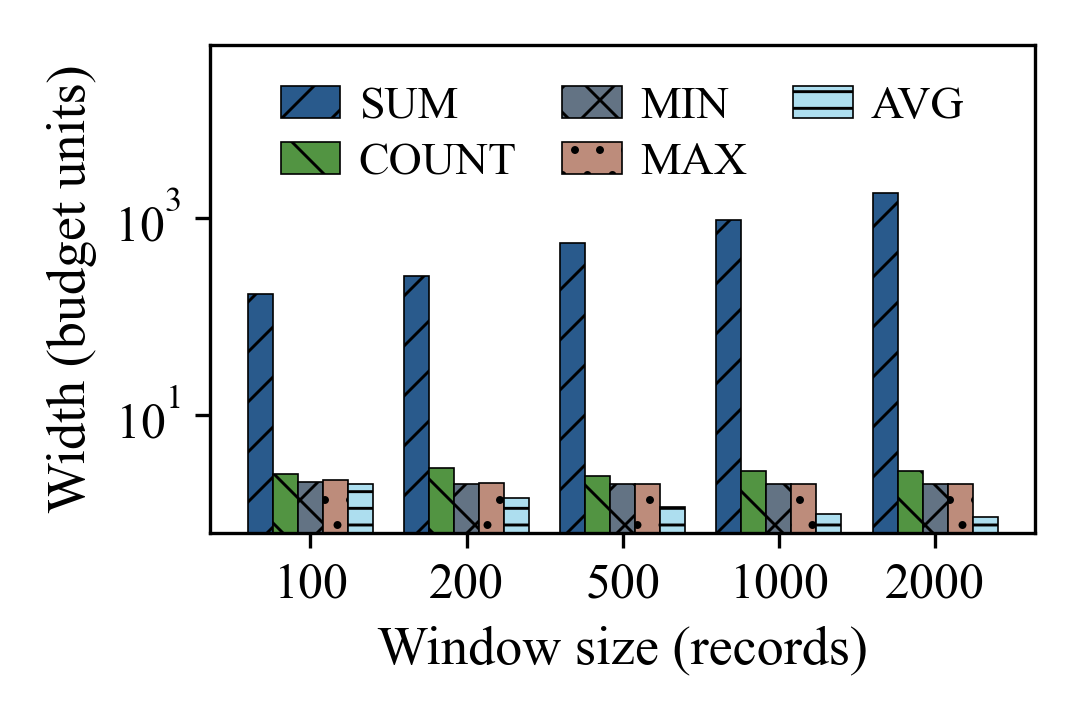}
  \caption{Five aggregates}
\end{subfigure}
\caption{The cost of an $\varepsilon$-bounded answer.}
\vspace{-6mm}
\label{fig:fastcost}
\end{figure}

\subsubsection{The break-even point}
Fig.~\ref{fig:range-be}(b) puts the measured saving next to what Eq.~\eqref{eq:payoff} predicts from byte counts alone, $n/s$ against $b_g/b_r=3$. The dashed line marks parity. The prediction is accurate on the four non-degenerate streams. It gives $0.70$ against a measured $0.70$, $0.71$ against $0.71$, $0.94$ against $0.92$, and $4.39$ against $4.05$. This is because the break-even is a statement about what a proof carries, and $n/s$ is a property of the data rather than a parameter of the framework, so the crossing point can be computed before any query is run. A record costs 16 bytes on the wire and a segment's model fields cost 48. We make one observation. On TRON-Dust the prediction of $24.0$ overshoots the measured $14.5$. This is because the segment fields no longer dominate a proof once a stream compresses seventyfold. The Merkle siblings binding the segments to the anchored root do not shrink with $n/s$, so the saving saturates.

\subsubsection{Aggregation on the approximate path}
Under the same segmentations, the approximate path moves aggregation VO size by well under a factor of two on every stream. This is because compressing a stream by $n/s$ removes only $\log_2(n/s)$ levels from an already-logarithmic proof, about three at tenfold compression. Section~\ref{sec:payoff} predicted exactly this. Retrieval rather than aggregation is therefore where model segments earn their place.

\subsubsection{Boundary resolution}
Fig.~\ref{fig:fastcost}(a) reports the undecided count $K$ at a window edge as the arrival budget sweeps from a quarter to four times a stream's median inter-arrival time. $K$ rises with the budget, staying in single digits over most of the range and reaching about eleven only at the loosest budget. This is because $k_j=\lceil 2\varepsilon^{t}_j/\mu_j\rceil$ counts the records the arrival model cannot separate from the edge, a count proportional to the budget in inter-arrivals. Across every stream, budget and window in this experiment, 1{,}000 queries in all, the verified interval contained the exact answer.

\subsubsection{The five aggregates}\label{sec:exp-aggs}
Fig.~\ref{fig:fastcost}(b) reports the verified interval width of all five aggregates on ETH-Gas, in units of the value budget $\varepsilon^{v}$. As the window grows from 100 to 2{,}000 records the SUM interval widens from 170 to 1{,}830 budget units, while MIN and MAX stay within 2.0 to 2.2 and AVG \emph{narrows} from 2.0 to 0.93. COUNT, whose interval is the undecided run itself, stays between 2.5 and 3.0 records. This is because Lemma~\ref{lem:eps} charges SUM one $\varepsilon^{v}$ for every record the window covers, whereas an extremum pays the budget once however many records there are, and AVG divides a SUM interval that grows with the window by a count that grows with it too. The excess of MIN and MAX over 2.0 appears only at the smallest window, where an undecided candidate holds the extremum. Moreover, we make one observation.

At 100 records SUM is charged 1.7 budget units per covered record, close to AVG's whole interval of 2.0, and at 2{,}000 the SUM interval is 1{,}961 times the AVG interval. This is because the boundary term $\Delta_t$ is paid twice regardless of window length while the value term $\Delta_v$ is paid once per record, so short windows are dominated by the edges for every aggregate and only SUM inherits the growth of $\Delta_v$. Across five streams and five window lengths, 500 windows and 2{,}500 interval checks in all, every one of the five intervals contained the exact answer.

\begin{figure}[!t]
\centering
\begin{subfigure}{0.49\linewidth}
  \includegraphics[width=\linewidth]{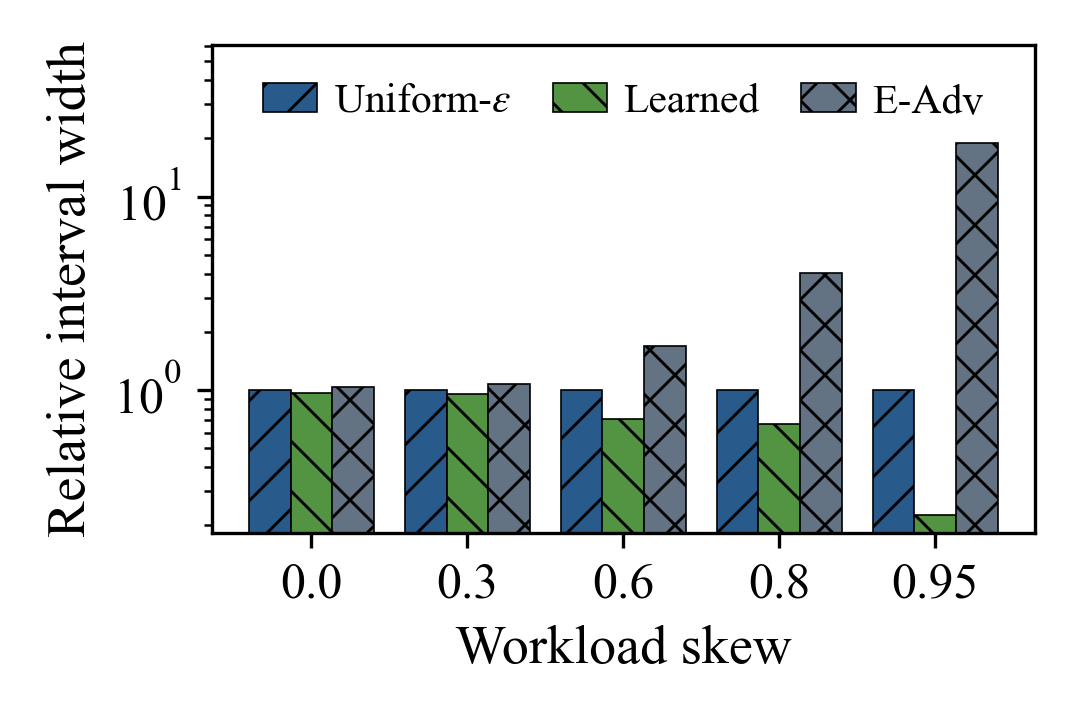}
  \caption{ETH-Gas, rate}
\end{subfigure}\hfill
\begin{subfigure}{0.49\linewidth}
  \includegraphics[width=\linewidth]{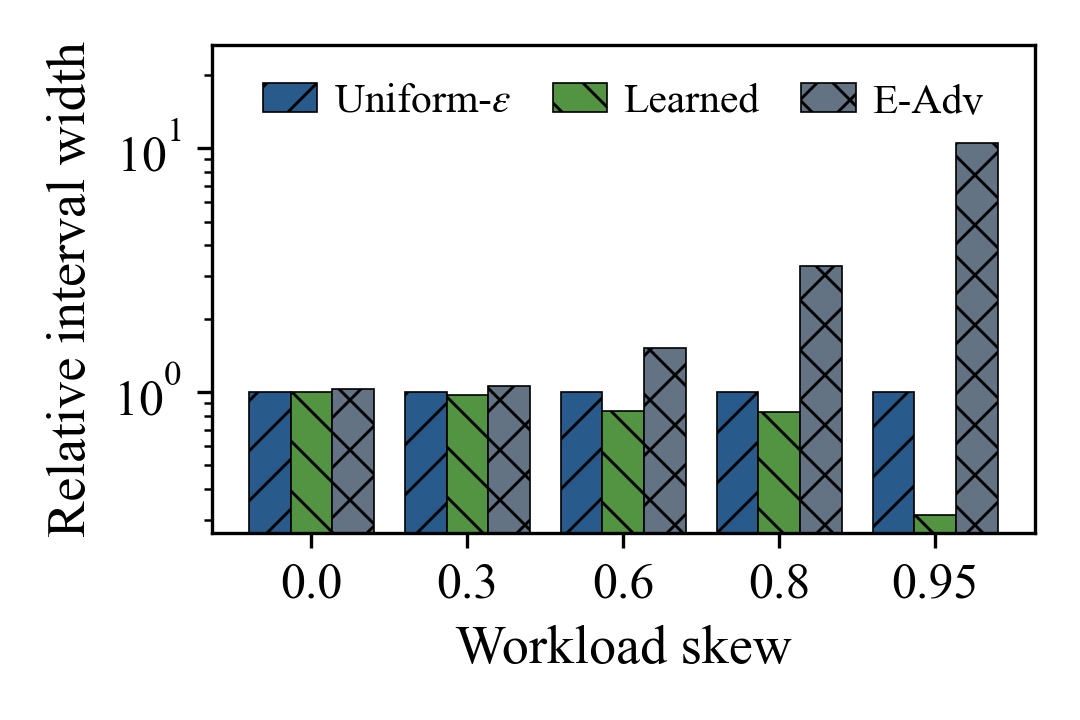}
  \caption{TRON-Balance, state}
\end{subfigure}
\caption{Interval width against skew (rate and state).}
\vspace{-5mm}
\label{fig:encoder-rs}
\end{figure}

\begin{figure}[!t]
\centering
\begin{subfigure}{0.49\linewidth}
  \includegraphics[width=\linewidth]{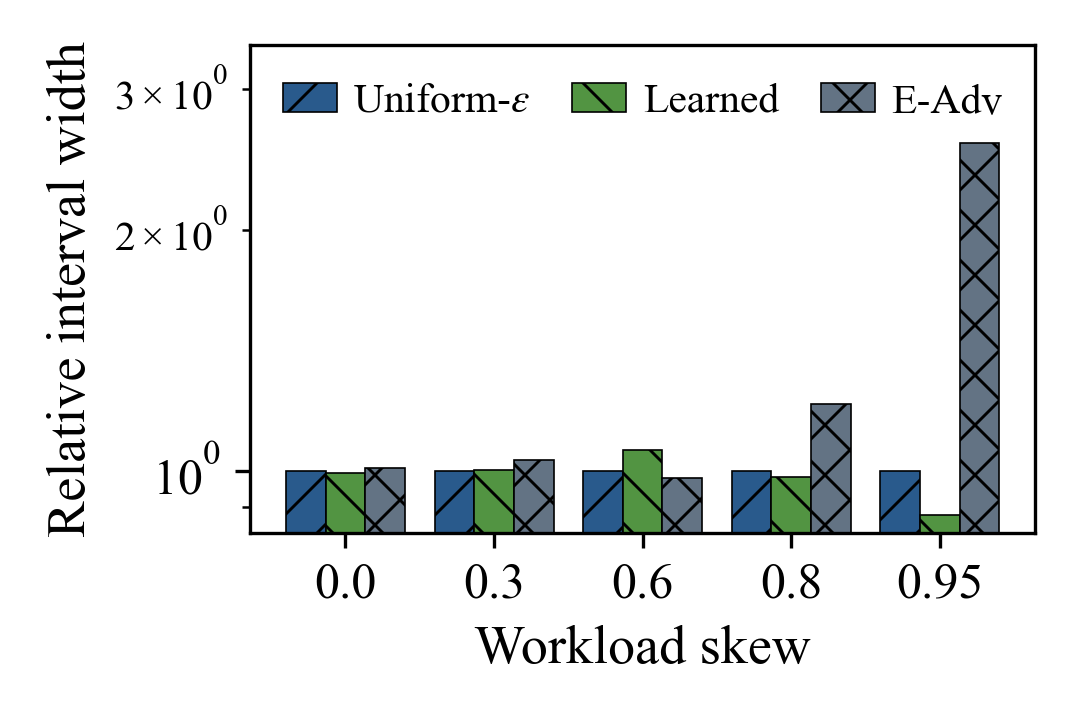}
  \caption{TRON-Transfer, event}
\end{subfigure}\hfill
\begin{subfigure}{0.49\linewidth}
  \includegraphics[width=\linewidth]{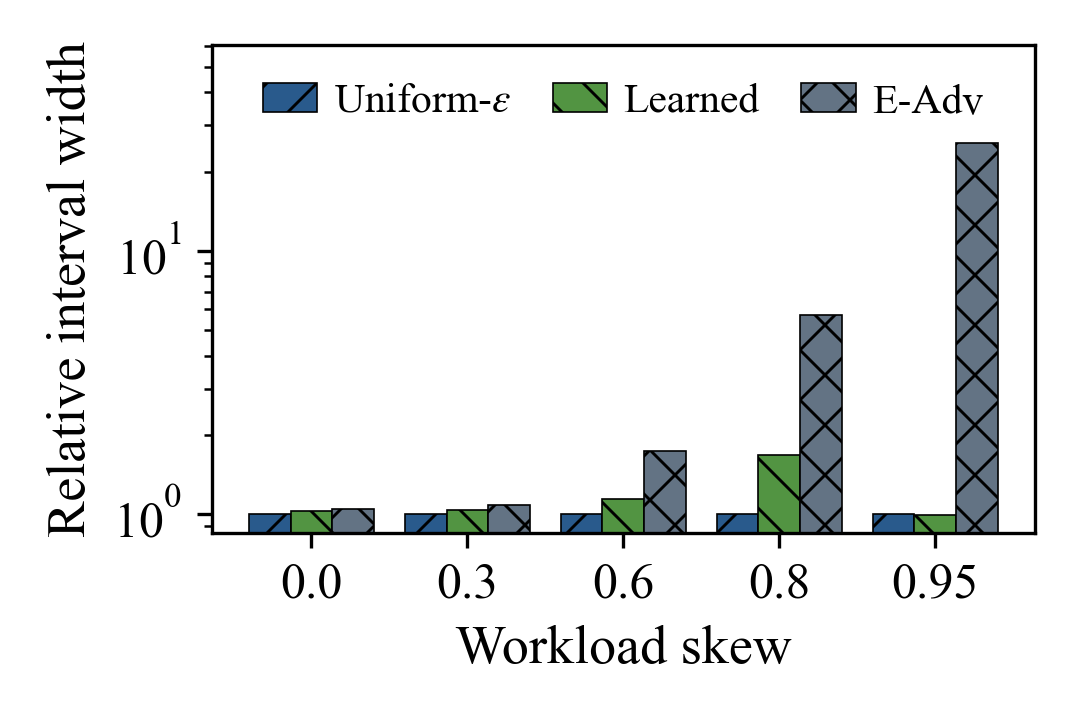}
  \caption{TRON-Dust, degenerate}
\end{subfigure}
\caption{Interval width against skew (events and degenerate).}
\vspace{-6mm}
\label{fig:encoder-ed}
\end{figure}

\subsubsection{Encoder study}\label{sec:exp-encoder}
Figs.~\ref{fig:encoder-rs} and~\ref{fig:encoder-ed} report the verified interval width under each policy relative to Uniform-$\varepsilon$ as the workload skew grows from 0 to 0.95. E-Adv widens the interval by up to $25.6\times$ on TRON-Dust, $18.9\times$ on ETH-Gas and $10.5\times$ on TRON-Balance, while Learned narrows it to $0.23$ and $0.31$ of the uniform width on the latter two. Under every encoder, on every stream and at every skew, the returned interval contained the exact answer, $108{,}000$ interval checks over the five aggregates with no exception. This is because the client's checks read only authenticated fields whose bounds miners validated against the chain, so an encoder chooses how much slack to declare but never whether the declaration holds. This is Proposition~\ref{prop:encoder} measured rather than argued. Moreover, we make two observations.

First, Learned's benefit tracks how much of the interval the value budget accounts for, not how far the stream compresses. Halving $\varepsilon^{v}$ alone and re-encoding takes the width to $0.42$ on ETH-Gas and $0.62$ on TRON-Balance, where the value term is $79.8\%$ and $76.0\%$ of the interval, but to $0.99$ on TRON-Transfer and $1.00$ on TRON-Dust, where it is $6.6\%$ and $0.0\%$. On the latter only the arrival budget moves the width at all, and halving $\varepsilon^{t}$ takes it to $0.58$. This is because the water-filling law is derived under a segment budget set by the value bound, so where a single model already fits the values exactly the policy reallocates a quantity the answer does not depend on. Compressibility predicts none of this. TRON-Dust is the most compressible stream in the set and the one on which Learned never beats Uniform-$\varepsilon$, as Fig.~\ref{fig:encoder-ed}(b) shows.

Second, E-Adv's damage grows with skew and Learned's benefit does not appear until the skew is substantial. The trend is not pointwise. At moderate skew either policy can land on the wrong side of parity. This is because a concentrated workload creates both the opportunity and the exposure. The more the queries pile onto one region, the more an allocator gains by tightening exactly there, and the more it loses by tightening everywhere else.

% ======================================================================
\section{Conclusion}\label{sec:conclusion}
In this paper, we propose the \texttt{VeriTS} framework that enables efficient and verifiable time-series query services for blockchain systems. \texttt{VeriTS} maintains an off-chain query layer that stores each stream under an authenticated aggregate interval tree serving as both the query index and the authenticated data structure. \texttt{VeriTS} verifies completeness through a minimum covering set and soundness through aggregate folding, and extends both guarantees to an approximate path over model segments under a certified error bound. An experimental study offers evidence that \texttt{VeriTS} supports efficient verifiable query services while keeping every answer within its certified interval, even under an adversarial encoder. In future research, it is of interest to support more complex verifiable time-series queries, such as pattern matching, top-$k$, and forecasting.

\bibliographystyle{IEEEtran}
\bibliography{vtq}

% Generated by IEEEtran.bst, version: 1.12 (2007/01/11)
\begin{thebibliography}{10}
\providecommand{\url}[1]{#1}
\csname url@samestyle\endcsname
\providecommand{\newblock}{\relax}
\providecommand{\bibinfo}[2]{#2}
\providecommand{\BIBentrySTDinterwordspacing}{\spaceskip=0pt\relax}
\providecommand{\BIBentryALTinterwordstretchfactor}{4}
\providecommand{\BIBentryALTinterwordspacing}{\spaceskip=\fontdimen2\font plus
\BIBentryALTinterwordstretchfactor\fontdimen3\font minus
  \fontdimen4\font\relax}
\providecommand{\BIBforeignlanguage}[2]{{%
\expandafter\ifx\csname l@#1\endcsname\relax
\typeout{** WARNING: IEEEtran.bst: No hyphenation pattern has been}%
\typeout{** loaded for the language `#1'. Using the pattern for}%
\typeout{** the default language instead.}%
\else
\language=\csname l@#1\endcsname
\fi
#2}}
\providecommand{\BIBdecl}{\relax}
\BIBdecl

\bibitem{treleaven2017blockchain}
P.~Treleaven, R.~G. Brown, and D.~Yang, ``Blockchain technology in finance,''
  \emph{Computer}, vol.~50, no.~9, pp. 14--17, 2017.

\bibitem{attaran2022blockchain}
M.~Attaran, ``Blockchain technology in healthcare: Challenges and
  opportunities,'' \emph{IJHM}, vol.~15, no.~1, pp. 70--83, 2022.

\bibitem{liu2024pricing}
Z.~Liu, B.~Huang, Y.~Li, Q.~Sun, T.~B. Pedersen, and D.~W. Gao, ``Pricing game
  and blockchain for electricity data trading in low-carbon smart energy
  systems,'' \emph{TII}, 2024.

\bibitem{yao2023learned}
Z.~Yao, J.~Xin, K.~Hao, Z.~Wang, and W.~Zhu, ``Learned-index-based semantic
  keyword query on blockchain,'' \emph{Mathematics}, vol.~11, no.~9, p. 2055,
  2023.

\bibitem{li2017etherql}
Y.~Li, K.~Zheng, Y.~Yan, Q.~Liu, and X.~Zhou, ``{EtherQL}: A query layer for
  blockchain system,'' in \emph{DASFAA}, 2017, pp. 556--567.

\bibitem{zhu2019sebdb}
Y.~Zhu, Z.~Zhang, C.~Jin, A.~Zhou, and Y.~Yan, ``{SEBDB}: Semantics empowered
  blockchain database,'' in \emph{ICDE}, 2019, pp. 1820--1831.

\bibitem{peng2020falcondb}
Y.~Peng, M.~Du, F.~Li, R.~Cheng, and D.~Song, ``{FalconDB}: Blockchain-based
  collaborative database,'' in \emph{SIGMOD}, 2020, pp. 637--652.

\bibitem{wu2021vql}
H.~Wu, Z.~Peng, S.~Guo, Y.~Yang, and B.~Xiao, ``{VQL}: Efficient and verifiable
  cloud query services for blockchain systems,'' \emph{TPDS}, vol.~33, no.~6,
  pp. 1393--1406, 2022.

\bibitem{yao2025vgq}
Z.~Yao, T.~Li, J.~Xin, Y.~Li, C.~Wang, Z.~Wang, D.~Srivastava, and C.~S.
  Jensen, ``{VGQ}: Enabling verifiable graph queries on blockchain systems,''
  in \emph{ICDE}, 2025, pp. 3602--3614.

\bibitem{li2022evolutionary}
T.~Li, L.~Chen, C.~S. Jensen, T.~B. Pedersen, Y.~Gao, and J.~Hu, ``Evolutionary
  clustering of moving objects,'' in \emph{ICDE}, 2022, pp. 2399--2411.

\bibitem{hu2023spatio}
D.~Hu, L.~Chen, H.~Fang, Z.~Fang, T.~Li, and Y.~Gao, ``Spatio-temporal
  trajectory similarity measures: A comprehensive survey and quantitative
  study,'' \emph{TKDE}, vol.~36, no.~5, pp. 2191--2212, 2023.

\bibitem{hu2024estimator}
D.~Hu, Z.~Fang, H.~Fang, T.~Li, C.~Shen, L.~Chen, and Y.~Gao, ``{Estimator}: An
  effective and scalable framework for transportation mode classification over
  trajectories,'' \emph{TITS}, vol.~25, no.~11, pp. 15\,562--15\,573, 2024.

\bibitem{yao2024tsec}
Y.~Yao, H.~Jie, L.~Chen, T.~Li, Y.~Gao, and S.~Wen, ``{Tsec}: An efficient and
  effective framework for time series classification,'' in \emph{ICDE}, 2024,
  pp. 1394--1406.

\bibitem{eichinger2015time}
F.~Eichinger, P.~Efros, S.~Karnouskos, and K.~B{\"o}hm, ``A time-series
  compression technique and its application to the smart grid,'' \emph{VLDBJ},
  vol.~24, no.~2, pp. 193--218, 2015.

\bibitem{xu2026tcrl}
W.~Xu, Z.~Yao, W.~Li, Z.~Song, Y.~Song, T.~Li, and Y.~Li, ``{TCRL}:
  Temporal-coupled adversarial training for robust constrained reinforcement
  learning in worst-case scenarios,'' in \emph{AAMAS}, 2026, pp. 3489--3491.

\bibitem{wang2023iotdb}
C.~Wang, J.~Qiao, X.~Huang, S.~Song, H.~Hou, T.~Jiang, L.~Rui, J.~Wang, and
  J.~Sun, ``{Apache} {IoTDB}: A time series database for {IoT} applications,''
  \emph{PACMMOD}, vol.~1, no.~2, pp. 195:1--195:27, 2023.

\bibitem{influxdb2024tsm}
{InfluxData}, ``{InfluxDB} {TSM} storage engine,''
  \url{https://docs.influxdata.com/influxdb/v1/concepts/storage_engine/}, 2024,
  official documentation.

\bibitem{xu2019vchain}
C.~Xu, C.~Zhang, and J.~Xu, ``{vChain}: Enabling verifiable {B}oolean range
  queries over blockchain databases,'' in \emph{SIGMOD}, 2019, pp. 141--158.

\bibitem{wang2022vchain+}
H.~Wang, C.~Xu, C.~Zhang, J.~Xu, Z.~Peng, and J.~Pei, ``{vChain+}: Optimizing
  verifiable blockchain {B}oolean range queries,'' in \emph{ICDE}, 2022, pp.
  1927--1940.

\bibitem{zhang2019gem}
C.~Zhang, C.~Xu, J.~Xu, Y.~Tang, and B.~Choi, ``{{GEM$^2$-tree}}: A
  gas-efficient structure for authenticated range queries in blockchain,'' in
  \emph{ICDE}, 2019, pp. 842--853.

\bibitem{ruan2021lineagechain}
P.~Ruan, T.~T.~A. Dinh, Q.~Lin, M.~Zhang, G.~Chen, and B.~C. Ooi,
  ``{LineageChain}: A fine-grained, secure and efficient data provenance system
  for blockchains,'' \emph{VLDBJ}, vol.~30, pp. 3--24, 2021.

\bibitem{singh2023efficient}
B.~C. Singh, Q.~Ye, H.~Hu, and B.~Xiao, ``Efficient and lightweight indexing
  approach for multi-dimensional historical data in blockchain,'' \emph{FGCS},
  vol. 139, pp. 210--223, 2023.

\bibitem{zhang2021authenticated}
C.~Zhang, C.~Xu, H.~Wang, J.~Xu, and B.~Choi, ``Authenticated keyword search in
  scalable hybrid-storage blockchains,'' in \emph{ICDE}, 2021, pp. 996--1007.

\bibitem{zhang2015integridb}
Y.~Zhang, J.~Katz, and C.~Papamanthou, ``{IntegriDB}: Verifiable {SQL} for
  outsourced databases,'' in \emph{CCS}, 2015, pp. 1480--1491.

\bibitem{li2010authenticated}
F.~Li, M.~Hadjieleftheriou, G.~Kollios, and L.~Reyzin, ``Authenticated index
  structures for aggregation queries,'' \emph{TISSEC}, vol.~13, no.~4, pp.
  1--35, 2010.

\bibitem{zhou2023veridkg}
E.~Zhou, S.~Guo, Z.~Hong, C.~S. Jensen, Y.~Xiao, D.~Zhang, J.~Liang, and
  Q.~Pei, ``{VeriDKG}: A verifiable {SPARQL} query engine for decentralized
  knowledge graphs,'' \emph{PVLDB}, vol.~17, no.~4, pp. 912--925, 2023.

\bibitem{hao2022efficient}
K.~Hao, J.~Xin, Z.~Wang, Z.~Yao, and G.~Wang, ``On efficient top-k transaction
  path query processing in blockchain database,'' \emph{DKE}, vol. 141, p.
  102079, 2022.

\bibitem{yao2023efficient}
Z.~Yao, Z.~Wang, L.~Wen, K.~Hao, and J.~Xu, ``Efficient blockchain data trusty
  provenance based on the {W3C} {PROV} model,'' in \emph{ADMA}, 2023, pp.
  61--76.

\bibitem{hao2023efficient}
K.~Hao, J.~Xin, Z.~Wang, Z.~Yao, and G.~Wang, ``Efficient and secure data
  sharing scheme on interoperable blockchain database,'' \emph{TBD}, vol.~9,
  no.~4, pp. 1171--1185, 2023.

\bibitem{li2020compression}
T.~Li, R.~Huang, L.~Chen, C.~S. Jensen, and T.~B. Pedersen, ``Compression of
  uncertain trajectories in road networks,'' \emph{PVLDB}, vol.~13, no.~7, pp.
  1050--1063, 2020.

\bibitem{li2021trace}
T.~Li, L.~Chen, C.~S. Jensen, and T.~B. Pedersen, ``{TRACE}: Real-time
  compression of streaming trajectories in road networks,'' \emph{PVLDB},
  vol.~14, no.~7, pp. 1175--1187, 2021.

\bibitem{yao2024camel}
Y.~Yao, L.~Chen, Z.~Fang, Y.~Gao, C.~S. Jensen, and T.~Li, ``{Camel}: Efficient
  compression of floating-point time series,'' \emph{PACMMOD}, vol.~2, no.~6,
  pp. 1--26, 2024.

\bibitem{orourke1981online}
J.~O'Rourke, ``An on-line algorithm for fitting straight lines between data
  ranges,'' \emph{CACM}, vol.~24, no.~9, pp. 574--578, 1981.

\bibitem{xie2014optimalplr}
Q.~Xie, C.~Pang, X.~Zhou, X.~Zhang, and K.~Deng, ``Maximum error-bounded
  piecewise linear representation for online stream approximation,''
  \emph{VLDBJ}, vol.~23, no.~6, pp. 915--937, 2014.

\end{thebibliography}
\end{document}